\newcommand{\tr}{{\rm tr}}
\def\>{\ensuremath{\rangle}}
\def\<{\ensuremath{\langle}}
\newtheorem{thm}{Theorem}[section]
\newtheorem{lem}{Lemma}[section]
\newtheorem{defn}{Definition}[section]
\newtheorem{prop}{Proposition}[section]
\newtheorem{exam}{Example}[section]
\begin{document}
\title{Toward Automatic Verification of Quantum Programs}

\author{\IEEEauthorblockN{Mingsheng Ying}
\IEEEauthorblockA{Centre for Quantum Software and Information, 
University of Technology Sydney,
Australia\\ State Key Laboratory of Computer Science, Institute of Software, Chinese Academy of Sciences, China\\ Department  of Computer Science and Technology, Tsinghua University, China\\
Email: Mingsheng.Ying@uts.edu.au; Yuan.Feng@uts.edu.au}
}

\maketitle

\begin{abstract}
This paper summarises the results obtained by the author and his collaborators in a program logic approach to the verification of quantum programs, including quantum Hoare logic, invariant generation and termination analysis for quantum programs. It also introduces the notion of proof outline and several auxiliary rules for more conveniently reasoning about quantum programs. Some problems for future research are proposed at the end of the paper.
\end{abstract}

\IEEEpeerreviewmaketitle

\section{Introduction}\label{introduce}

Programming is error-prone. Programming a quantum computer and designing quantum communication protocols are even worse due to the weird nature of quantum systems \cite{Ying2016}. Therefore, verification techniques and automatic tools for quantum programs and quantum protocols will be indispensable whence commercial quantum computers and quantum communication systems are available. In the last 10 years, various verification techniques for classical programs have been extended to deal with quantum programs; in particular, several quantum program logics have been proposed: \begin{itemize}\item Brunet and Jorrand~\cite{BJ04} proposed to apply Birkhoff-von Neumann quantum logic in reasoning about quantum programs. 
In the Birkhoff-von Neumann logic, closed subspaces of the state Hilbert space of a quantum system are used to represent the properties of the system, and logical connectives $\neg$ (negation), $\wedge$ (and), $\vee$ (or) are interpreted as the operations: ortho-complement, meet and join, respectively, in the orthomodular lattice of all closed subspaces of the Hilbert space. The idea of~\cite{BJ04} is to add quantum operations (e.g. unitary transformations and quantum measurements) into the language of quantum logic so that it can be employed to describe and reason about the dynamics (evolution) of a quantum system (e.g. a quantum program). 

\item Chadha, Mateus and Sernadas~\cite{CMS} presented one of the first attempts to develop a Hoare-like logic for quantum programs. The quantum programs considered in~\cite{CMS} can have both classical and quantum variables (memories). 
But only bounded iterations are allowed in these quantum programs. 
The logic given in ~\cite{CMS} is an extension of exogenous quantum logic \cite{MS06}, which has terms representing \textit{amplitudes} of quantum states, obtained by introducing the axioms for unitary transformations and quantum measurements. Roughly speaking, the assertions (preconditions and postconditions) in this logic are properties of the real and imaginary components of the amplitudes. 

\item Baltag and Smets~\cite{BS06} chose to develop a quantum generalisation of Propositional Dynamic Logic (PDL), which is an extension of Hoare logic for verification of classical programs. The approach of \cite{BS06} is different from \cite{BJ04}, \cite{CMS}, and it can be essentially seen as a reinterpretation of the language of PDL where a quantum action is interpreted as a (classical) relation between quantum states and models the input-output relation of a quantum program.
In particular, by introducing some axioms to handle separation and locality, their proof system can be used to reason about the behaviours (e.g. entanglement) of compound quantum systems.

\item Kakutani \cite{Kaku09} proposed a quantum extension of Den Hartog and De Vink's probabilistic Hoare logic \cite{HV02}. The quantum programming language used in \cite{Kaku09} is Selinger's QPL \cite{Selinger04} without recursive calls but with while loops.  The assertions in this logic are probabilistic predicates (or assertions) used in \cite{HV02} but, of course, re-interpreted in the setting of quantum computation (e.g. the probability that qubit variable $q$ returns to $0$ upon the measurement in the computational basis is $\frac{1}{4}$).  

\item Some useful proof rules for reasoning about quantum programs were introduced by Feng et al.~\cite{FDJY07}. The programming language employed in~\cite{FDJY07} is a natural quantum extension of the \textbf{while}-language. For simplicity of presentation, only purely quantum programs without classical variables were considered there. The assertions in these proof rules are physical observables (mathematically modelled by Hermitian operators), or quantum predicates as called in \cite{DP06}.   
Furthermore, a Hoare-like logic for both partial and total correctness of quantum programs with (relative) completeness was established in~\cite{Ying2011}. 
\end{itemize}
The mathematic foundations, the programming languages and the expressivity of the assertions of the quantum program logics presented in \cite{CMS}, \cite{Kaku09}, \cite{Ying2011} are 
carefully compared in a recent survey \cite{Rand17}.

Except program logics, model-checking techniques have also been developed for verification of quantum communication protocols and quantum programs. For example, Gay, Papanikolaou and Nagarajan \cite{GPN08} developed a model-checker for verifying properties of the quantum systems that can be modelled in the stabiliser formalism. Feng et al. \cite{FengYY13} presented an algorithm for model-checking quantum systems described as super-operator valued Markov chains, and the algorithm was implemented by Feng et al. in \cite{FengHTZ15}. The problem of checking linear-time properties of quantum systems was studied in \cite{TOCL13}, where linear-time properties are defined to be infinite sequences of sets of atomic propositions modelled as closed subspaces of the state Hilbert spaces, as in the Birkhoff-von Neumann quantum logic.   

The main purpose of this paper is to summarise the results obtained by the author and his collaborators in a program logic approach to the verification of quantum programs. The paper is an extension of an invited talk at SETTA'2016 \cite{Ying2016a}, but the results presented in Sections \ref{outline} and \ref{auxiliary} as well as the examples given in Section \ref{sec-examples} are new and have not been published elsewhere. 

The paper is organised as follows: The syntax and semantics of a quantum extension of the \textbf{while}-language is defined and a logic of Hoare-style for reasoning about quantum programs is presented in Section \ref{qwhile}. 
As is well-known, correctness specification of classical programs can be very tricky. Indeed, correctness specification of quantum programs can be even much trickier. In Section \ref{sec-examples}, we present several simple examples through which the reader can better understand the difference between classical and quantum correctness specifications. The notion of proof outline is introduced and a strong soundness theorem is proved for quantum programs in Section \ref{outline}. Several auxiliary axioms and inference rules are given in Section \ref{auxiliary}, which can help to simplify verifications of quantum programs. Some basic ideas for mechanising quantum program verification based on the quantum Hoare logic are described in Section \ref{mechanising}. The algorithms for generating invariants and termination analysis of quantum programs are summarised in Sections \ref{invariant} and \ref{termination}, respectively. Some topics for future studies are pointed out in the concluding section. For convenience of the reader, several basic properties of operators in Hilbert spaces needed in this paper are listed in Appendix \ref{app-a}. Several simple quantum relations (i.e. quantum predicates in a tensor product of state Hilbert spaces) were employed in Section \ref{sec-examples}, a brief discussion on composition of quantum relations that can be used to generate more sophisticated quantum relations from these simple ones is given in Appendix \ref{app-relation}. For readability, the proofs of the results in Sections \ref{outline} and \ref{auxiliary} are deferred to Appendix \ref{app-b}.  

\section{Hoare Logic for Quantum Programs}\label{qwhile}

Hoare logic is a cornerstone in verification of classical programs. As mentioned in Section \ref{introduce}, several attempts \cite{BJ04}, \cite{CMS}, \cite{BS06}, \cite{FDJY07}, \cite{Kaku09}, \cite{Ying2011} have been made to build a Hoare-like logic for quantum programs. In this section, we focus on the one presented in \cite{Ying2011} because it is the only quantum Hoare logic with (relative) completeness in the existing literature.  

\subsection{Syntax and Semantics}

Let us first briefly recall the syntax of a quantum extension of the \textbf{while}-language. 
It is assumed that the reader is familiar with the basic notions of quantum computing. (A reader not familiar with them can have a quick look at Section 2 of \cite{Ying2011} or Section 1.1 of \cite{YYW17}; for more details, we refer to \cite{Nielsen} or \cite{Ying2016}.) We assume a countably infinite set $\mathit{Var}$ of quantum variables. For each $q\in\mathit{Var}$, we write $\mathcal{H}_q$ for its state Hilbert space. 

\begin{defn}[Syntax \cite{Ying2011}, \cite{Ying2016}]\label{q-syntax} 
The quantum \textbf{while}-programs are defined by
the grammar:
\begin{align}\label{syntax}P::=\ \mathbf{skip}\ & |\ P_1;P_2\ |\ q:=|0\rangle\ |\ \overline{q}:=U[\overline{q}]\\ \label{syntax+}&|\ \mathbf{if}\ \left(\square m\cdot M[\overline{q}] =m\rightarrow P_m\right)\ \mathbf{fi}
\\ \label{syntax++}&|\ \mathbf{while}\ M[\overline{q}]=1\ \mathbf{do}\ P\ \mathbf{od}\end{align}\end{defn}

The command \textquotedblleft$q:=|0\rangle$\textquotedblright\ is an initialisation that sets quantum variable $q$ to a basis state $|0\rangle$.
 The statement \textquotedblleft$\overline{q}:=U[\overline{q}]$\textquotedblright\ means that unitary transformation $U$ is performed on quantum
register $\overline{q}$, leaving the states of the variables
not in $\overline{q}$ unchanged.
The construct \textquotedblleft$\mathbf{if}\cdots\mathbf{fi}$\textquotedblright\ is a quantum
generalisation of case or switch statement. In executing it, measurement $M=\{M_{m}\}$ is performed on 
$\overline{q}$, and then a subprogram $P_m$ is selected to be executed next according to the outcomes $m$ of
measurement. The statement \textquotedblleft$\mathbf{while}\cdots\mathbf{od}$\textquotedblright\ is a quantum generalisation of \textbf{while}-loop. The measurement in it has only two possible outcomes $0$, $1$. If the outcome $0$ is observed, then the program terminates, and if the outcome $1$ occurs, the
program executes the loop body $P$ and continues the loop.

To further illustrate these program constructs, let us see two simple examples, both of which are quantum generalisations of certain probabilistic programs. The first is taken from \cite{YYW17}, and it is the quantum version of the example of three dials in \cite{KMMM10}.

\begin{exam}[Three Quantum Dials]\label{exam0} Consider a slot machine that has three dials $d_1,d_2,d_3$ and two suits $\heartsuit$ and $\diamondsuit$. It spins the dials independently so that they come to rest on each of the suits with equal probability. This machine can be modelled as a probabilistic program: \begin{align*}\mathit{flip}\equiv\ (d_1:=\heartsuit \oplus_{\frac{1}{2}} d_1:=\diamondsuit);\ &(d_2:=\heartsuit \oplus_{\frac{1}{2}} d_2:=\diamondsuit);\\ &(d_3:=\heartsuit \oplus_{\frac{1}{2}} d_3:=\diamondsuit)\end{align*} where $P_1 \oplus_{p} P_2$ stands for a probabilistic choice which chooses to execute $P$ with probability $p$ and to execute $Q$ with probability $1-p$.

We can define a quantum variant of $\mathit{flip}$ as follows:
$$\mathit{qflip}\equiv H[d_1];\ H[d_2];\ H[d_3]$$ where $H$ is the Hadamard gate in the $2$-dimensional Hilbert space $\mathcal{H}_2$ with $\{|\heartsuit\rangle,|\diamondsuit\rangle\}$ as an orthonormal basis. The program $\mathit{qflip}$ also spins the dials, but does it in a quantum way modelled by the Hadamard \textquotedblleft coin-tossing\textquotedblright\ operator $H$. 

It is worth pointing out an interesting difference between the probabilistic and quantum setting: spinning the dials with equal probability can be implemented in many different ways in the quantum world; e.g. a quantum gate different from the Hadamard gate. 
\end{exam} 

The second example is a quantum generalisation of random walk that every one working on probabilistic algorithms or programming must be familiar with. Here is a simplified version (taken from \cite{Yi13}) of the one-dimensional quantum walks defined in \cite{Am-B01}.

\begin{exam}[Quantum Walk]\label{exam1} Let $\mathcal{H}_c$ be the $2$-dimensional Hilbert space with orthonormal basis states $|L\rangle$ and $|R\rangle$, indicating directions Left and Right, respectively. It is the state space of a quantum coin. Let $\mathcal{H}_p$ be the $n$-dimensional Hilbert space with orthonormal basis states $|0\rangle, |1\rangle, ..., |n-1\rangle$, where vector $|i\rangle$ denotes position $i$ for each $0\leq i< n$. The state space of the walk is $\mathcal{H}=\mathcal{H}_c\otimes \mathcal{H}_p$. The initial state is $|L\rangle |0\rangle$. Each step of the walk consists of:\begin{enumerate}\item Measure the position of the system to see whether it is $1$. If the outcome is \textquotedblleft yes\textquotedblright, then the walk terminates, otherwise, it continues. The measurement is mathematically modelled by $M=\{M_\mathit{yes}, M_\mathit{no}\}$, where $$M_\mathit{yes}=|1\rangle\langle 1|,\ M_\mathit{no}=I_p-M_{yes}=\sum_{i\neq 1} |i\rangle\langle i|$$ and $I_p$ is the identity operator in the position space $\mathcal{H}_p$;  \item The Hadamard \textquotedblleft coin-tossing\textquotedblright\ operator $H$ is applied in the coin space $\mathcal{H}_c$; \item The shift operator $S$ defined by $$S|L,i\rangle=|L,i\ominus 1\rangle,\quad\quad S|R,i\rangle=|R,i\oplus 1\rangle$$ for $i=0,1,...,n-1$ is performed on the space $\mathcal{H}$. Intuitively, the system walks one step left or right according to the direction state. Here, $\oplus$ and $\ominus$ stand for addition and subtraction modulo $n$, respectively. The operator $S$ can be equivalently written as
$$S=\sum_{i=0}^{n-1}|L\rangle\langle L|\otimes|i\ominus 1\rangle\langle i|+\sum_{i=0}^{n-1} |R\rangle\langle R|\otimes|i\oplus 1\rangle\langle i|.$$
\end{enumerate}
This walk can be written as the quantum \textbf{while}-loop:
\begin{align*}\mathit{QW}\equiv\ \ c:=|L\rangle; p:=|0\rangle;&\mathbf{while}\ M[p]=\mathit{no}\ \mathbf{do}\ c:=H[c];\\ &c,p:=S[c,p]\ \mathbf{od}
\end{align*}

It is worth noticing an essential difference between the quantum walk and a classical random walk: the coin (or direction) variable $c$ can be in a superposition of $|L\rangle$ and $|R\rangle$ like $|+\rangle=\frac{1}{\sqrt{2}}(|L\rangle+|R\rangle)$, and thus the walker is moving left and right \textquotedblleft simultaneously\textquotedblright; for example, $$\frac{1}{\sqrt{2}}(|L\rangle+|R\rangle)|i\rangle\rightarrow \frac{1}{\sqrt{2}}(|L\rangle|i\ominus 1\rangle+|R\rangle|i\oplus 1\rangle).$$ This means that if the walker is currently at position $i$, then after one step she/he will be at both position $i\ominus 1$ and $i\oplus 1$.
\end{exam}

Similar to the case of classical \textbf{while}-programs, the operational semantics of quantum programs can be defined as a transition relation between configurations. For each quantum program $P$, we write $\mathit{var}(P)$ for the set of quantum variables occurring in $P$. Let $$\mathcal{H}_P=\bigotimes_{q\in\mathit{var}(P)}\mathcal{H}_q$$ be the state Hilbert space of $P$.  A partial density operator is defined as a positive operator $\rho$ with trace $\mathit{tr}(\rho)\leq 1$. We write $\mathcal{D}(\mathcal{H}_P)$ for the set of partial density operators in $\mathcal{H}_P$. A configuration is a pair $\langle P,\rho\rangle,$
where $P$ is a program or the termination symbol $\downarrow$, and $\rho\in\mathcal{D}(\mathcal{H}_P)$ denotes the state of quantum variables.

\begin{defn}[Operational Semantics \cite{Ying2011}, \cite{Ying2016}]\label{def-op-sem} The operational semantics of quantum programs is defined by the transition rules in Figure \ref{fig 3.1}. \begin{figure}\centering
\begin{equation*}\begin{split}&({\rm Sk})\ \ \langle\mathbf{skip},\rho\rangle\rightarrow\langle \downarrow,\rho\rangle\\ &({\rm In})\ \ \ \langle
q:=|0\rangle,\rho\rangle\rightarrow\langle \downarrow,\rho^{q}_0\rangle\\ 
&({\rm UT})\ \ \langle\overline{q}:=U[\overline{q}],\rho\rangle\rightarrow\langle
\downarrow,U\rho U^{\dag}\rangle\\ &({\rm SC})\ \ \ \frac{\langle P_1,\rho\rangle\rightarrow\langle
P_1^{\prime},\rho^{\prime}\rangle} {\langle
P_1;P_2,\rho\rangle\rightarrow\langle
P_1^{\prime};P_2,\rho^\prime\rangle}\\
&({\rm IF})\ \ \ \langle\mathbf{if}\ (\square m\cdot
M[\overline{q}]=m\rightarrow P_m)\ \mathbf{fi},\rho\rangle\rightarrow\langle
P_m,M_m\rho M_m^{\dag}\rangle\\
&({\rm L}0)\ \ \ \langle\mathbf{while}\
M[\overline{q}]=1\ \mathbf{do}\
P\ \mathbf{od},\rho\rangle\rightarrow\langle \downarrow, M_0\rho M_0^{\dag}\rangle\\
&({\rm L}1)\ \ \ \langle\mathbf{while}\
M[\overline{q}]=1\ \mathbf{do}\ P\ \mathbf{od},\rho\rangle\rightarrow\\ &\quad\quad\quad\quad\quad\quad\langle
P;\mathbf{while}\ M[\overline{q}]=1\ \mathbf{do}\ P\ \mathbf{od}, M_1\rho
M_1^{\dag}\rangle\end{split}\end{equation*}
\caption{Transition Rules for Quantum \textbf{while}-Programs.\ \ \ \ In (IN), $\rho^{q}_0=|0\rangle_q\langle 0|\rho|0\rangle_q\langle
0|+|0\rangle_q\langle 1|\rho|1\rangle_q\langle 0|$ if 
$\mathit{type}(q)=\mathbf{Bool}$ and $\rho^{q}_0=\sum_{n=-\infty}^{\infty}|0\rangle_q\langle n|\rho|n\rangle_q\langle
0|$ if $\mathit{type}(q)=\mathbf{Int}.$
In (SC), we make the convention $\downarrow;P_2=P_2.$
In (IF), $m$ ranges over every possible outcome of measurement $M=\{M_m\}.$}\label{fig 3.1}
\end{figure}\end{defn}

It is worth pointing out that the transition rules (In), (UT), (IF), (L0) and (L1) are decreed by the basic postulates of quantum mechanics. A major difference between the operational semantics of classical and quantum programs should be noticed. During the execution of a classical case statement or a \textbf{while}-loop, checking its guard does not change the state of program variables. In contrast, checking the guard of a quantum case statement or a quantum loop requires to perform a measurement on (the system denoted by) the program variables, which, again decreed by a basic postulate of quantum mechanics, changes the state of these variables. From the operational semantics, we see that the control flow of a quantum program is determined by the outcomes of the quantum measurements occurring in the case statements or loops. Note that the outcomes of quantum measurements are classical information, so the control flows of quantum programs considered in this paper are classical. Quantum programs with quantum control flows were studied in \cite{vanT}, \cite{AG05}, \cite{Zu}, \cite{Pan}, \cite{Val} and Chapters 6 and 7 of \cite{Ying2016}.     

\begin{defn}[Denotational Semantics \cite{Ying2011}, \cite{Ying2016}]\label{def-den-sem} For any program $P$, its semantic function is the mapping: \begin{align*}&\llbracket P\rrbracket:\mathcal{D}(\mathcal{H}_P)\rightarrow \mathcal{D}(\mathcal{H}_P)\\ &\llbracket P\rrbracket(\rho)=\sum\left\{|\rho^\prime: \langle P,\rho\rangle\rightarrow^\ast\langle \downarrow,\rho^\prime\rangle|\right\}\end{align*} for every $\rho\in\mathcal{D}(\mathcal{H}_P)$, where $\rightarrow^\ast$ is the reflexive and transitive closure of $\rightarrow$, and $\left\{|\cdot|\right\}$ denotes a multi-set.
\end{defn}

The defining equation of $\llbracket P\rrbracket (\rho)$ deserves an explanation. It is clear from rules (IF), (L0) and (L1) that transition relation $\rightarrow$ is usually not deterministic; that is, $\{|\rho^\prime:\langle P,\rho\rangle\rightarrow^\ast\langle\downarrow,\rho^\prime\rangle|\}$ may have more than one element. This is different from the case of classical \textbf{while}-programs. The summation in the right-hand side of the equation comes from the essential linearity of quantum mechanics. It is proved that $\llbracket P\rrbracket(\cdot)$ is a quantum operation (or a super-operator) (for the definition of quantum operation, see \cite{Nielsen}, Chapter 8 or \cite{Ying2016}, Chapter 2). 

To illustrate the above two definitions, let reconsider the three quantum dials in Example \ref{exam0}. 

\begin{exam}[Semantics of Three Quantum Dials] ---\ A Continuation of Example \ref{exam0}\label{exam2}. A state of probabilistic program $\mathit{flip}$ is a configuration of the slot machine, i.e.  a mapping from dials to suits. The semantics of $\mathit{flip}$ is a function that maps each initial state to a uniform distribution of states in which every configurations has probability $\frac{1}{8}$.

In contrast, the state Hilbert space of quantum program $\mathit{qflip}$ is then $\mathcal{H}_2^{\otimes 3}$. For instance, if we write: $$|+\rangle=\frac{1}{\sqrt{2}}(|\heartsuit\rangle+|\diamondsuit\rangle),\quad \quad |-\rangle=\frac{1}{\sqrt{2}}(|\heartsuit\rangle-|\diamondsuit\rangle)$$ for the equal superpositions of $|\heartsuit\rangle$ and $|\diamondsuit\rangle$, then: $$\llbracket \mathit{qflip}\rrbracket(|+,-,+\rangle)=|\heartsuit,\diamondsuit,\heartsuit\rangle;$$ if we write: $$|\mathit{W}\rangle=\frac{1}{\sqrt{3}}(|\heartsuit,\heartsuit,\diamondsuit\rangle+|\heartsuit,\diamondsuit,\heartsuit\rangle+|\diamondsuit,\heartsuit,\heartsuit\rangle)$$ for the Werner state, a typical entangled state of three qubits, then
\begin{align*}\llbracket \mathit{qflip}\rrbracket(|W\rangle)=\frac{1}{2\sqrt{6}}&(3|\heartsuit,\heartsuit,\heartsuit\rangle+|\heartsuit,\heartsuit,\diamondsuit\rangle+|\heartsuit,\diamondsuit,\heartsuit\rangle\\ &- \ |\heartsuit,\diamondsuit,\diamondsuit\rangle+|\diamondsuit,\heartsuit,\heartsuit\rangle-|\diamondsuit,\heartsuit,\diamondsuit\rangle\\ &-|\diamondsuit,\diamondsuit,\heartsuit\rangle-3|\diamondsuit,\diamondsuit,\diamondsuit\rangle).\end{align*} 

(Note that Definitions \ref{def-op-sem} and \ref{def-den-sem} are given in a more general way; i.e. in terms of density operators. Here, for simplicity, a pure state $|\psi\rangle$ is identified with the corresponding density operator $\rho=|\psi\rangle\langle\psi|$.)
\end{exam}

The L\"{o}wner order between operators in the state Hilbert space $\mathcal{H}_P$ of a quantum program is defined as follows: 
$$A\sqsubseteq B\ {\rm if}\ B-A\ {\rm is\ a\ positive\ operator.}$$ This order can be naturally lifted to an order between quantum operations (or super-operators). It can be shown that the set $\mathcal{D}(\mathcal{H}_P)$ of partial density operators in $\mathcal{H}_P$ with the L\"{o}wner order $\sqsubseteq$ is a complete partial order (CPO), and all quantum operations in $\mathcal{H}_P$ form a CPO too. Then a fixed point characterisation for the denotational semantics of quantum \textbf{while}-loops can be derived (see Proposition 5.1(6) and Corollary 5.1 in \cite{Ying2011}, or Proposition 3.3.2 and Corollary 3.3.1 in \cite{Ying2016}).   

\subsection{Partial Correctness and Total Correctness}

Now we review the notions of partial and total correctness for quantum programs. Recall from \cite{MM05} that probabilistic predicates (or assertions, e.g. preconditions and postconditions) are defined to be bounded real-valued functions interpreted as the expectations of random variables.  
As introduced in \cite{DP06}, a quantum predicate  in a Hilbert space $\mathcal{H}$ is an observable (a Hermitian operator) $A$ in $\mathcal{H}$ with $0\sqsubseteq A\sqsubseteq I$, where $0$ and $I$ are the zero operator and the identity operator in $\mathcal{H}$, respectively, and $\sqsubseteq$ stands for the L\"{o}wner order. In the quantum foundations literature, a quantum predicate is called an \textit{effect} (see, for example, \cite{Kr83}). 

\begin{defn}[Correctness Formula \cite{Ying2011}, \cite{Ying2016}] A correctness formula (or a Hoare triple) is a statement of the form $$\{A\}P\{B\}$$ where $P$ is a quantum program, and both $A, B$
are quantum predicates in $\mathcal{H}_P$, called the precondition and postcondition, respectively.\end{defn}

The appearance of a quantum Hoare triple is exactly the same as that of its classical counterpart. But in a classical Hoare triple, precondition $A$ and postcondition $B$ stands for (first-order) logical formulas, and in a quantum Hoare triple, $A$ and $B$ are two operators that are interpreted as observables in physics.  

\begin{defn}[Correctness \cite{Ying2011}, \cite{Ying2016}]\label{correctness-interpretation} --- Partial and Total. 
\begin{enumerate}\item The correctness formula $\{A\}P\{B\}$ is true in
the sense of total correctness, written $\models_{\mathit{tot}}\{A\}P\{B\},$ if for all input states 
$\rho\in\mathcal{D}(\mathcal{H}_P)$ we have: \begin{equation}\label{eq-total}\tr(A\rho)\leq \tr(B\llbracket P \rrbracket (\rho)).\end{equation}

\item The correctness formula $\{A\}P\{B\}$ is true in
the sense of partial correctness, written $\models_{\mathit{par}}\{A\}P\{B\},$ if for all input states 
$\rho\in\mathcal{D}(\mathcal{H}_P)$ we have: \begin{equation}\label{eq-partial}\tr(A\rho)\leq \tr(B\llbracket P \rrbracket (\rho))+
[\tr(\rho)-\tr(\llbracket P \rrbracket (\rho))].\end{equation}\end{enumerate}
\end{defn}

A key to understanding the above definition is the observation that $\mathit{tr}(A\rho)$ is the expected (average) value of observable $A$ when the system is in state $\rho$.  
The following lemma shows that the defining inequalities (\ref{eq-total}) and (\ref{eq-partial}) for total and partial correctness can be restated in a form of L\"{o}wner order, which is often easier to manipulate because the universal quantifier over density operator $\rho$ is eliminated.

Let us consider the three quantum dials in Examples \ref{exam0} and \ref{exam2} once again to illustrate the notion of correctness introduced in the above definition. 

\begin{exam}[Correctness of Three Quantum Dials] We write: $$|\mathit{GHZ}\rangle=\frac{1}{\sqrt{2}}(|\heartsuit,\heartsuit,\heartsuit\rangle+|\diamondsuit,\diamondsuit,\diamondsuit\rangle)$$ for the GHZ   (Greenberger-Horne-Zeilinger) state, another typical entangled state of three qubits, $$|\Phi\rangle=\frac{1}{2}(|\heartsuit,\heartsuit,\heartsuit\rangle+|\heartsuit,\diamondsuit,\diamondsuit\rangle+|\diamondsuit,\heartsuit,\diamondsuit\rangle+|\diamondsuit,\diamondsuit,\heartsuit\rangle),$$ and $|\Psi\rangle=|\heartsuit,\heartsuit,\heartsuit\rangle$. Then $A=|\Phi\rangle\langle\Phi|$, $B=|\Psi\rangle\langle\Psi|$ and $C=|\mathit{GHZ}\rangle\langle\mathit{GHZ}|$ are all quantum predicates. It is easy to check that $$\models_\mathit{tot}\{A\}\mathit{qflip}\{C\},\ \ \ \ \ \ \ \ \ \ \ \models_\mathit{tot}\{\frac{1}{4}B\}\mathit{qflip}\{C\}.$$ This means that if the input is state $|\Phi\rangle$, then program $\mathit{qflip}$ will certainly output the GHZ state; and if the input is state $|\Psi\rangle$, it will output a state $|\Gamma\rangle$ that is similar to the GHZ state in the sense: $$\Pr(|\Gamma\rangle\ {\rm and\ the\ GHZ\ state\ cannot\ be\ discriminated})\geq \frac{1}{4}.$$ 

Note that partial and total correctness are the same for $\mathit{qflip}$ because it does not contain any loop.
\end{exam}

The quantum predicates $A,B,C$ in the above example are very simple and defined by a particular input/output state. More examples of correctness specifications will be given in Section \ref{sec-examples}.

\begin{lem}\label{lem-correctness} \begin{enumerate}\item $\models_{\mathit{tot}}\{A\}P\{B\}$ if and only if $A\sqsubseteq\llbracket P\rrbracket^\ast(B)$. 
\item $\models_{\mathit{par}}\{A\}P\{B\}$ if and only if: $$A\sqsubseteq\llbracket P\rrbracket^\ast(B)+(I-\llbracket P\rrbracket^\ast(I))$$ where $I$ is the identity operator in $\mathcal{H}_P$, and $\llbracket P\rrbracket^\ast$ stands for the dual of super-operator $\llbracket P\rrbracket$ (see Definition \ref{def-dual} in Appendix \ref{app-a}). 
\end{enumerate}
\end{lem}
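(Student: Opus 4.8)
The plan is to reduce both equivalences to a single, purely algebraic characterisation of the L\"owner order, using the defining adjoint property of the dual super-operator. Recall that $\llbracket P\rrbracket^\ast$ is characterised by the identity $\tr(B\,\llbracket P\rrbracket(\rho))=\tr(\llbracket P\rrbracket^\ast(B)\,\rho)$ for all $\rho\in\mathcal{D}(\mathcal{H}_P)$ and all operators $B$ (Definition \ref{def-dual}). The workhorse I would isolate first is the following elementary fact: for a Hermitian operator $M$ on $\mathcal{H}_P$, one has $0\sqsubseteq M$ if and only if $\tr(M\rho)\geq 0$ for every partial density operator $\rho$. The forward direction is immediate, since $\tr(M\rho)=\tr(\sqrt{\rho}\,M\sqrt{\rho})\geq 0$ whenever $M$ and $\rho$ are positive; the converse follows by taking the test states $\rho=|\psi\rangle\langle\psi|$ (which lie in $\mathcal{D}(\mathcal{H}_P)$) to obtain $\langle\psi|M|\psi\rangle\geq 0$ for every $|\psi\rangle$, i.e. $0\sqsubseteq M$.

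For part (1), I would rewrite the right-hand side of the total-correctness inequality \eqref{eq-total} via the adjoint identity as $\tr(B\,\llbracket P\rrbracket(\rho))=\tr(\llbracket P\rrbracket^\ast(B)\,\rho)$. Then $\models_{\mathit{tot}}\{A\}P\{B\}$ says precisely that $\tr\big((\llbracket P\rrbracket^\ast(B)-A)\,\rho\big)\geq 0$ for all $\rho\in\mathcal{D}(\mathcal{H}_P)$. Since $A$ is Hermitian and $\llbracket P\rrbracket^\ast$ preserves Hermiticity (being the dual of a quantum operation), the operator $M:=\llbracket P\rrbracket^\ast(B)-A$ is Hermitian, so the workhorse fact applies and yields $0\sqsubseteq M$, i.e. $A\sqsubseteq\llbracket P\rrbracket^\ast(B)$. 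Both directions of the equivalence then follow at once from the ``if and only if''.

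For part (2), I would dispatch the two extra terms on the right-hand side of \eqref{eq-partial} by the same adjoint bookkeeping. Writing $\tr(\rho)=\tr(I\rho)$ and $\tr(\llbracket P\rrbracket(\rho))=\tr(I\,\llbracket P\rrbracket(\rho))=\tr(\llbracket P\rrbracket^\ast(I)\,\rho)$, the entire right-hand side collapses into a single trace pairing $\tr\big((\llbracket P\rrbracket^\ast(B)+I-\llbracket P\rrbracket^\ast(I))\,\rho\big)$. Hence $\models_{\mathit{par}}\{A\}P\{B\}$ is equivalent to $\tr\big((\llbracket P\rrbracket^\ast(B)+(I-\llbracket P\rrbracket^\ast(I))-A)\,\rho\big)\geq 0$ for all $\rho$, and the workhorse fact again converts this into the stated L\"owner inequality $A\sqsubseteq\llbracket P\rrbracket^\ast(B)+(I-\llbracket P\rrbracket^\ast(I))$.

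The calculations are routine once the adjoint identity is in hand; the only genuine content is the equivalence between positivity in the L\"owner order and non-negativity of all trace pairings against partial density operators. The one point requiring a little care — and the place I would be most careful — is verifying Hermiticity of the relevant operators, so that the characterisation is applicable, and confirming that restricting the test states $\rho$ to \emph{partial} density operators, rather than all positive operators, still suffices; this holds because the rank-one projectors $|\psi\rangle\langle\psi|$ already lie in $\mathcal{D}(\mathcal{H}_P)$ and detect positivity.
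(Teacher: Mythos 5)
Your proposal is correct and follows essentially the same route as the paper, which proves the lemma by combining the trace characterisation of the L\"owner order (the first lemma of Appendix \ref{app-a}, i.e. Lemma 4.1.2 of \cite{Ying2016}) with the duality identity of Lemma \ref{lem-dual}; you have simply written out the computation that the paper declares ``immediate.'' The extra care you take over Hermiticity and over the sufficiency of rank-one test states is sound and fills in exactly the details the paper leaves implicit.
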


\begin{proof} Immediate from Lemma 4.1.2 in \cite{Ying2016} and Lemma \ref{lem-dual} in Appendix \ref{app-a}.
\end{proof}

The above lemma will be extensively used in the proofs of the results in Section \ref{auxiliary}.

\subsection{Quantum Hoare Logic}
A Hoare-like logic for quantum programs is established in \cite{Ying2011}. It consists of:
\begin{itemize}\item A proof system qPD for partial correctness: The axioms and inference rules of qPD are presented in Figure \ref{fig 3.2}.  
\begin{figure}\centering
\begin{equation*}\begin{split}
&({\rm Ax.Sk})\ \ \ \{A\}\mathbf{Skip}\{A\}\\ 
&({\rm Ax.In.B})\ \ \ \{|0\rangle_q\langle 0|A|0\rangle_q\langle 0|+|1\rangle_q\langle
0|A|0\rangle_q\langle 1|\}q:=|0\rangle\{A\}
 \\ &({\rm Ax.In.I}) \ \ \ \left\{\sum_{n=-\infty}^{\infty}|n\rangle_q\langle 0|A|0\rangle_q\langle
n|\right\}q:=|0\rangle\{A\}\\ 
&({\rm Ax.UT})\ \ \ 
\{U^{\dag}AU\}\overline{q}:=U\left[\overline{q}\right]\{A\}\\ 
&({\rm R.SC})\ \ \ 
\frac{\{A\}P_1\{B\}\ \ \ \ \ \ \{B\}P_2\{C\}}{\{A\}P_1;P_2\{C\}}\\
&({\rm R.IF})\ \ \ 
\frac{\{A_m\}P_m\{B\}\ {\rm for\ all}\ m}{\left\{\sum_m
M_m^{\dag}A_mM_m\right\}\mathbf{if}\ (\square m\cdot
M[\overline{q}]=m\rightarrow P_m)\ \mathbf{fi}\{B\}}\\ 
&({\rm R.LP})\ \ \ 
\frac{\{B\}P\left\{M_0^{\dag}AM_0+M_1^{\dag}BM_1\right\}}{\{M_0^{\dag}AM_0+M_1^{\dag}BM_1\}\mathbf{while}\
M[\overline{q}]=1\ \mathbf{do}\ P\ \mathbf{od}\{A\}}\\
&({\rm R.Or})\ \ \ \frac{A\sqsubseteq
A^{\prime}\ \ \ \ \{A^{\prime}\}P\{B^{\prime}\}\ \ \ \
B^{\prime}\sqsubseteq B}{\{A\}P\{B\}}
\end{split}\end{equation*}
\caption{Proof System qPD for Quantum \textbf{while}-Programs.\ \ \ \ In (Ax.In.B), $\mathit{type}(q)=\mathbf{Boolean}$. In (Ax.In.I), $\mathit{type}(q)=\mathbf{Int}$.}\label{fig 3.2}
\end{figure}
\item A proof system qTD for total correctness: qTD is obtained from qPD by replacing rule (R.LP) with the rule (R.LT) given in Figure \ref{fig 3.3}.
\begin{figure}\centering
\begin{equation*}\begin{split}
&({\rm R.LT})\ \ \ 
\frac{\begin{split}&\bullet\ \{Q\}S\{M_0^{\dag}PM_0+M_1^{\dag}QM_1\}\\ &\bullet\ {\rm
for\ any}\ \epsilon>0,\ t_\epsilon\ {\rm is\ a}\
(M_1^{\dag}QM_1,\epsilon){\rm -ranking}\\ &\quad\quad\quad{\rm function\ of\ loop}\
\mathbf{while}\ M[\overline{q}]=1\ \mathbf{do}\
S\ \mathbf{od}\end{split}}{\{M_0^{\dag}PM_0+M_1^{\dag}QM_1\}\mathbf{while}\
M[\overline{q}]=1\ \mathbf{do}\ S\ \mathbf{od}\{P\}}
\end{split}\end{equation*}
\caption{Proof System qTD for Quantum \textbf{while}-Programs.}\label{fig 3.3}
\end{figure} The notion of ranking function used in rule (R.LT) is defined in the following: 

\begin{defn}[\cite{Ying2011}, \cite{Ying2016}]\label{bou-def}Let $A$ be a quantum predicate in $\mathcal{H}_{\mathit{var}(P)\cup\overline{q}}$ and real number $\epsilon>0$. A function $$t: \mathcal{D}(\mathcal{H}_{\mathit{var}(P)\cup\overline{q}})\rightarrow\mathbb{N}\ ({\rm nonnegative\ integers})$$ is called a $(A,\epsilon)$-ranking 
function of quantum loop \textquotedblleft$\mathbf{while}\
M[\overline{q}]=1\ \mathbf{do}\ P\ \mathbf{od}$\textquotedblright\ if it
satisfies the following two conditions: for all $\rho\in
\mathcal{D}(\mathcal{H}_{\mathit{var}(P)\cup\overline{q}})$, \begin{enumerate}\item
$t\left(\llbracket S \rrbracket \left(M_1\rho M_1^{\dag}\right)\right)\leq t(\rho)$; and \item
$tr(P\rho)\geq\epsilon$ implies $t\left(\llbracket S \rrbracket \left(M_1\rho M_1^{\dag}\right)\right)<
t(\rho).$\end{enumerate} 
\end{defn} 
\end{itemize}

It is interesting to carefully compare the Hoare rule for partial correctness of classical loops:
 $$\frac{\{\varphi\wedge b\}P\{\varphi\}}{\{\varphi\}\mathbf{while}\
b\ \mathbf{do}\ P\ \mathbf{od}\{\varphi\wedge\neg b\}}$$ and its variant for total correctness with the corresponding rules (R.LP) and (R.LT) for quantum loops, respectively. At the first glance, the rules for classical loops and those for quantum loops are very different. But actually there is a certain similarity between them: in the rules for classical loops, $\varphi\wedge b$ (resp. $\varphi\wedge \neg b$) can be thought of as the restriction of $\varphi$ under $b$ (resp. $\neg b$). If we set: $$D=M_0^\dag AM_0 + M_1^\dag BM_1,$$ then $B$ can be seen as the restriction of $D$ under $M_1$ and $A$ the restriction of $D$ under $M_0$, and $M_0$ can be considered as the negation of $M_1$. Furthermore, as we will see at the end of Section \ref{invariant}, rules (R.LP) and (R.LT) can also be understood in the context of inductive assertion. 

The soundness and (relative) completeness of both proof system qPD and qTD were proved in \cite{Ying2011}, \cite{Ying2016}. 
\begin{thm}[Completeness \cite{Ying2011}, \cite{Ying2016}] For any quantum \textbf{while}-program $P$, and for any quantum predicates $A,B$,
\begin{align*}&\models_\mathit{par}\{A\}P\{B\}\Leftrightarrow\ \vdash_\mathit{qPD}\{A\}P\{B\},\\
&\models_\mathit{tot}\{A\}P\{B\}\Leftrightarrow\ \vdash_\mathit{qTD}\{A\}P\{B\}. 
\end{align*}
\end{thm}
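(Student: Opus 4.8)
The plan is to establish both biconditionals by proving the two directions separately: \emph{soundness} ($\vdash\,\Rightarrow\,\models$) and \emph{(relative) completeness} ($\models\,\Rightarrow\,\vdash$). Throughout I would work with the L\"{o}wner-order reformulation of correctness supplied by Lemma \ref{lem-correctness}, since it removes the universal quantifier over input states $\rho$ and turns every goal into an operator inequality involving the dual super-operator $\llbracket P\rrbracket^\ast$. Concretely, I would introduce the two semantic predicate transformers $\mathit{wp}.P.B=\llbracket P\rrbracket^\ast(B)$ and $\mathit{wlp}.P.B=\llbracket P\rrbracket^\ast(B)+(I-\llbracket P\rrbracket^\ast(I))$, so that Lemma \ref{lem-correctness} reads $\models_{\mathit{tot}}\{A\}P\{B\}$ iff $A\sqsubseteq\mathit{wp}.P.B$ and $\models_{\mathit{par}}\{A\}P\{B\}$ iff $A\sqsubseteq\mathit{wlp}.P.B$. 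A useful preliminary observation is that, because $\llbracket P\rrbracket$ is trace-nonincreasing, its dual satisfies $0\sqsubseteq\llbracket P\rrbracket^\ast(B)\sqsubseteq\llbracket P\rrbracket^\ast(I)\sqsubseteq I$ whenever $0\sqsubseteq B\sqsubseteq I$; hence both transformers map quantum predicates to quantum predicates, so the expressiveness issue that complicates classical relative completeness does not arise here.

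For soundness I would proceed by induction on the length of the derivation, equivalently on the structure of $P$. Each axiom is checked by computing the relevant dual: for $(\mathrm{Ax.UT})$ one has $\llbracket\overline{q}:=U[\overline{q}]\rrbracket^\ast(A)=U^\dag A U$, and the initialisation axioms follow from the explicit Kraus form of $\rho^q_0$ recorded in Figure \ref{fig 3.1}, while $(\mathrm{Ax.Sk})$ is immediate. The compositional rules $(\mathrm{R.SC})$ and $(\mathrm{R.IF})$ reduce to the behaviour of $(\cdot)^\ast$ under sequencing and to $\llbracket\mathbf{if}\cdots\mathbf{fi}\rrbracket^\ast(B)=\sum_m M_m^\dag\,\llbracket P_m\rrbracket^\ast(B)\,M_m$, and $(\mathrm{R.Or})$ is monotonicity of the order. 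The substantive cases are the loop rules: for $(\mathrm{R.LP})$ I would invoke the fixed-point characterisation of $\llbracket\mathbf{while}\cdots\mathbf{od}\rrbracket$ to show that $D=M_0^\dag AM_0+M_1^\dag BM_1$ is preserved along every finite unfolding and passes to the limit; for $(\mathrm{R.LT})$ one argues additionally that the $(M_1^\dag QM_1,\epsilon)$-ranking function of Definition \ref{bou-def} forces the ``missing'' probability mass to vanish, upgrading partial to total correctness.

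For completeness the key lemma is that the weakest-precondition triples are themselves derivable: $\vdash_{\mathit{qTD}}\{\mathit{wp}.P.B\}P\{B\}$ and $\vdash_{\mathit{qPD}}\{\mathit{wlp}.P.B\}P\{B\}$ for every $P$ and every postcondition $B$. This I would prove by structural induction on $P$, the base and composite non-loop cases being direct from the axioms together with $(\mathrm{R.SC})$, $(\mathrm{R.IF})$ and $(\mathrm{R.Or})$. Granting this lemma, the theorem follows at once: if $\models_{\mathit{tot}}\{A\}P\{B\}$ then $A\sqsubseteq\mathit{wp}.P.B$ by Lemma \ref{lem-correctness}, and one application of $(\mathrm{R.Or})$ to the derivable triple $\{\mathit{wp}.P.B\}P\{B\}$ yields $\vdash_{\mathit{qTD}}\{A\}P\{B\}$; the partial case is identical with $\mathit{wlp}$ and $\mathrm{qPD}$. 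It is exactly at this use of $(\mathrm{R.Or})$ that completeness is only \emph{relative}: the derivation presupposes an oracle able to certify the L\"{o}wner inequality $A\sqsubseteq\mathit{wp}.P.B$, the quantum analogue of Cook's assumption on the assertion theory.

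I expect the loop cases of the weakest-precondition lemma to be the main obstacle. For partial correctness one must express $\mathit{wlp}$ of the loop as the greatest fixed point of the unfolding transformer, take this operator as the $B$ in the premise of $(\mathrm{R.LP})$ with $A$ the postcondition, and verify the single-step hypothesis $\{B\}P\{M_0^\dag AM_0+M_1^\dag BM_1\}$ for this choice; this reduces to the fixed-point equation for $\llbracket P\rrbracket^\ast$. The genuinely delicate part is total correctness, where one must \emph{construct} a ranking function witnessing termination whenever the loop does terminate on all inputs. I would build $t(\rho)$ from the tail of the expected-iteration series determined by the semantics, and the work is to show that it takes values in $\mathbb{N}$ and satisfies the strict-decrease condition of Definition \ref{bou-def} for every $\epsilon>0$; continuity and limit arguments in the CPO of super-operators, rather than any new algebraic identity, are what make this step technical.
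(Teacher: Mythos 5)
Your outline is correct and follows essentially the same route as the original proof in \cite{Ying2011}, \cite{Ying2016}, which this paper only cites rather than reproves: soundness by induction on derivations via the L\"{o}wner-order reformulation of Lemma \ref{lem-correctness}, and relative completeness via derivability of the $\mathit{wp}$/$\mathit{wlp}$ triples by structural induction, with the greatest-fixed-point argument for (R.LP) and the construction of $(M_1^\dag QM_1,\epsilon)$-ranking functions from the tails of the convergent iteration series for (R.LT). Your observation that expressiveness is automatic because the transformers map quantum predicates to quantum predicates, and that relativity enters only through the side conditions of (R.Or), also matches the cited treatment.
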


\section{Examples of Quantum Correctness Specification}\label{sec-examples}

In this section, let us see several simple examples, which shows some interesting and tricky differences between specifying correctness of classical programs and that of quantum programs. 

\subsection{Unary Quantum Predicates} 

A quantum predicate about a single quantum system $q$; i.e. an observable (between the zero and identity operators) in state Hilbert space $\mathcal{H}_q$ can be understood as a unary quantum predicate. 

\begin{exam} An equality like $x=c$ with $x$ being a variable and $c$ a constant often appears in precondition or postcondition of a classical program correctness formula, e.g. $$\models_\mathit{tot}\{x=1\}x:=x+1\{x=2\}$$ Let $q$ be a quantum variable and $|\psi\rangle$ a given (constant) state in $\mathcal{H}_q$. Then equality \textquotedblleft $q=|\psi\rangle$\textquotedblright\ can be expressed as quantum predicate $A=|\psi\rangle\langle\psi|$ in $\mathcal{H}_q$, and we have: 
\begin{align}\label{u-1}&\models_\mathit{tot} \{U^\dag |\psi\rangle\langle\psi|U\}\ q:=U[q]\ \{A\},\\ &\label{u-2}\models_\mathit{tot} \{A\}\ q:=U[q]\ \{U|\psi\rangle\langle\psi|U^\dag\}\end{align} for any unitary operator $U$ in $\mathcal{H}_q$. The precondition in (\ref{u-1}) and the postcondition in (\ref{u-2}) express equalities \textquotedblleft $q=U^\dag|\psi\rangle$\textquotedblright, \textquotedblleft $q=U|\psi\rangle$\textquotedblright, respectively. More generally, for any closed subspace $X$ of $\mathcal{H}_q$, Membership \textquotedblleft $q\in X$\textquotedblright\ can be expressed as quantum predicate $P_X$ (the projection onto $X$) and we have: 
\begin{align}\label{x-1}&\models_\mathit{tot} \{P_{U^\dag (X)}\}\ q:=U[q]\ \{P_X\},\\ &\label{x-2}\models_\mathit{tot} \{P_X\}\ q:=U[q]\ \{P_{U(X)}\}\end{align}
The precondition in (\ref{x-1}) and the postcondition in (\ref{x-2}) express memberships \textquotedblleft $q\in U^\dag(X)$\textquotedblright, \textquotedblleft $q\in U(X)$\textquotedblright, respectively.
\end{exam}

\subsection{Quantum Relations}

A quantum predicate about multiple quantum systems $q_1,...,q_n$; i.e. an observable (between the zero and identity operators) in the tensor product: $$\bigotimes_{i=1}^n\mathcal{H}_{q_i},$$ can be understood as a quantum relation among $q_1,...,q_n$. In particular, a quantum predicate $A$ in $\mathcal{H}_1\otimes\mathcal{H}_2$ can be seen as a quantum (binary) relation between $\mathcal{H}_1$ and $\mathcal{H}_2$. The following are two examples showing some quantum relations used in preconditions and postconditions. A further discussion about quantum relation is given in Appendix \ref{app-relation}.

\begin{exam}\label{ex-symmetry} An equality like $x=y$ between two variables $x, y$ also often appears in precondition or postcondition of a classical program correctness formula, e.g. 
$$\models_\mathit{tot}\{x=y\}\ x:=x+1;y:=y+1\ \{x=y\}$$ \begin{enumerate}\item Let $p,q$ be two quantum variables with the same state Hilbert space $\mathcal{H}$. Then equality \textquotedblleft$p=q$\textquotedblright\ can be expressed as the symmetrisation operator: $$S_+=\frac{1}{2}(I+\mathrm{SWAP})$$ where $I$ is the identity operator in $\mathcal{H}\otimes\mathcal{H}$, and operator $\mathrm{SWAP}$ in $\mathcal{H}\otimes\mathcal{H}$ is defined by $$\mathrm{SWAP}|\varphi,\psi\rangle=|\psi,\varphi\rangle$$ for all $|\varphi\rangle,|\psi\rangle\in\mathcal{H}$, together with linearity.
It is easy to see that $\mathrm{SWAP}\cdot S_+=S_+\cdot \mathrm{SWAP}=\mathrm{SWAP}\cdot S_+\cdot \mathrm{SWAP}=S_+.$ Furthermore, for any unitary operator $U$ in $\mathcal{H}$, it holds that: 
$$\models_\mathit{tot}\{S_+\}\ p:=U[p]; q:=U[q]\ \{S_+\}$$ It is interesting to note that a similar conclusion holds for the anti-symmetrisation operator: $$S_-=\frac{1}{2}(I-\mathrm{SWAP})$$
that is, we have $\mathrm{SWAP}\cdot S_-=S_-\cdot \mathrm{SWAP}=-S_-,$ $\mathrm{SWAP}\cdot S_-\cdot \mathrm{SWAP}=S_-$ and $$\models_\mathit{tot}\{S_-\}\ p:=U[p]; q:=U[q]\ \{S_-\}$$ 
\item Equality \textquotedblleft $x=y$\textquotedblright\ can also be expressed in a different way; that is, as the operator $$=_\mathcal{B}\ =|\Psi\rangle\langle\Psi|$$ in $\mathcal{H}\otimes\mathcal{H}$, where $d=\dim\mathcal{H}$ and $|\Psi\rangle=\frac{1}{\sqrt{d}}\sum_i|ii\rangle$ is the maximally entangled state defined by an orthonormal basis $\mathcal{B}=\{|i\rangle\}$ of $\mathcal{H}$. This equality $=_\mathcal{B}$ has an intuitive explanation. The Hilbert-Schmidt inner product of two operators $A,B$ in $\mathcal{H}$: $$\langle A|B\rangle=\mathit{tr}\left(A^\dag B\right)$$ is often used to measure the similarity between $A$ and $B$. For two (mixed) states $\rho,\sigma$ in $\mathcal{H}$, it follows immediately from equation (9.73) in \cite{Nielsen} that $$\langle\rho |\sigma\rangle=d\cdot\langle\Psi|\rho\otimes\sigma|\Psi\rangle=d\cdot\mathit{tr}\left(=_\mathcal{B}(\rho\otimes\sigma)\right).$$ The quantity $\mathit{tr}\left(=_\mathcal{B}(\rho\otimes\sigma)\right)$ can be interpreted as the degree that $\rho, \sigma$ satisfies relation $=_\mathcal{B}$. Obviously, we have: 
$$\models_\mathit{tot}\{=_{U^\dag(\mathcal{B})}\}\ p:=U[p]; q:=U[q]\ \{=_\mathcal{B}\}$$ 
$$\models_\mathit{tot}\{=_\mathcal{B}\}\ p:=U[p]; q:=U[q]\ \{=_{U(\mathcal{B})}\}$$ 
where $=_{U(\mathcal{B})}$, $=_{U^\dag(\mathcal{B})}$ are the equalities defined by orthonormal bases $U(\mathcal{B})=\{U|i\rangle\}$ and $U^\dag(\mathcal{B})=\left\{U^\dag|i\rangle\right\}$, respectively. 
\end{enumerate}\end{exam}

\begin{exam} Consider the following correctness formula for classical programs: \begin{align*}\models_\mathit{tot} \{X=x\wedge Y=y\}\ &R:=X;X:=Y;Y:=R\\ &\{X=y\wedge Y=x\}\end{align*} It means that after executing the program, the values of variables $X$ and $Y$ are exchanged. To properly specify the precondition and postcondition, auxiliary (ghost) variables $x, y$ are introduced. In quantum computing, statements $R:=X, X:=Y$ and $Y:=R$ cannot be directly realised as prohibited by the no-cloning theorem \cite{WZ82}. But we can consider the following modification: let $X,Y, R$ be three quantum variables with the same state Hilbert space $\mathcal{H}$.  Then we have: \begin{equation}\label{no-ghost}\begin{split}\models_\mathit{tot}\ &\{A\}R,X:=\mathrm{SWAP}[R,X];Y,Y:=\mathrm{SWAP}[X,Y];\\ &Y,R:=\mathrm{SWAP}[Y,R]\ \{(\mathrm{SWAP}\otimes I)A(\mathrm{SWAP}\otimes I)\}\end{split}\end{equation} for any quantum predicate $A$ in $\mathcal{H}\otimes\mathcal{H}\otimes\mathcal{H}$, where the first, second and third $\mathcal{H}$ are the state Hilbert spaces of $X,Y,R$ respectively, and $I$ is the identity operator in $\mathcal{H}$. Note that we do not use any auxiliary (ghost) variable in correctness formula (\ref{no-ghost}). In particular, if $$A=|\varphi\rangle\langle\varphi|\otimes |\psi\rangle\langle\psi|\otimes I,$$ where $|\varphi\rangle,|\psi\rangle\in\mathcal{H}$, and $I$ is the identity operator in $\mathcal{H}$, then it holds that 
\begin{equation}\label{ghost}\begin{split}\models_\mathit{tot}\ &\{|\varphi\rangle\langle\varphi|\otimes |\psi\rangle\langle\psi|\otimes I\}R,X:=\mathrm{SWAP}[R,X];\\ &Y,Y:=\mathrm{SWAP}[X,Y]; Y,R:=\mathrm{SWAP}[Y,R]\\ &\{|\psi\rangle\langle\psi|\otimes |\varphi\rangle\langle\varphi|\otimes I\}\end{split}\end{equation} Intuitively, the precondition and postcondition of (\ref{ghost}) means \textquotedblleft $X=|\varphi\rangle$ and $Y=|\psi\rangle$\textquotedblright, \textquotedblleft $X=|\psi\rangle$ and $Y=|\varphi\rangle$\textquotedblright, respectively. Here, auxiliary variables $|\varphi\rangle, |\psi\rangle$ are employed. 
\end{exam}

All of the examples given in the previous subsection and this subsection can be proved using axiom (Ax.UT) and rule (R.SC) in Figure \ref{fig 3.2}. A more interesting example using rule (R.IF) will be presented in Section \ref{outline}.

\subsection{Logical Connectives}

Logical connectives are widely used in specifying (the preconditions and postconditions of) classical programs. Here, we show that several quantum counterparts of them can be used in specifying correctness of quantum programs. However, it is not the case that every logical connective has an appropriate quantum counterpart.  

\begin{exam} The predicate \textquotedblleft True\textquotedblright\ and \textquotedblleft False\textquotedblright\ are described by the identity and zero operators $I$ and $0$, respectively, in the state Hilbert space $\mathcal{H}_P$ of the program $P$ under consideration.  \begin{itemize}\item $\models_\mathit{par}\{I\}P\{B\}$ means that for any density operator $\rho$, $$\mathit{tr}(B\llbracket P\rrbracket(\rho))=\mathit{tr}(\llbracket P\rrbracket(\rho));$$ that is, the probability that postcondition $B$ is satisfied by the output is equal to the probability that the program terminates. 
\item $\models_\mathit{par}\{A\}P\{I\}$ always holds, because $A\sqsubseteq I$ implies that for any partial density operator $\rho$, $$\mathit{tr}(A\rho)\leq\mathit{tr}(\rho)=\mathit{tr}(\llbracket P\rrbracket(\rho))+[\mathit{tr}(\rho)-\mathit{tr}(\llbracket P\rrbracket(\rho))].$$
\item $\models_\mathit{tot}\{I\}P\{B\}$ means that for any density operator $\rho$, $$\mathit{tr}(B\llbracket P\rrbracket(\rho))=1.$$ Since $B\sqsubseteq I$, we have $\mathit{tr}(B\llbracket P\rrbracket(\rho))\leq\mathit{tr}(\llbracket P\rrbracket(\rho))$ and thus $\mathit{tr}(\llbracket P\rrbracket(\rho))=1$; that is, program $P$ always terminates, and output $\llbracket P\rrbracket(\rho)$ satisfies postcondition $B$. 
\item $\models_\mathit{tot}\{A\}P\{I\}$ means that for any density operator, $\mathit{tr}(A\rho)\leq\mathit{tr}(\llbracket P\rrbracket(\rho))$; intuitively, if precondition $A$ is satisfied by input $\rho$ to $P$, then $P$ terminates.  
\end{itemize}
\end{exam}

\begin{exam} \begin{enumerate}
\item Negation in classical logic has a quantum counterpart: for each quantum predicate in Hilbert space $\mathcal{H}$, $I-A$ can be used as the negation of $A$. In fact, for any density operator $\rho$ in $\mathcal{H}$, $$\mathit{tr}[(I-A)\rho]=1-\mathit{tr}(A\rho).$$ 

\item Conjunction in classical logic does not always have an appropriate quantum counterpart; some difficulties caused by this fact in reasoning about quantum programs were discussed in \cite{Ying07}. But if $A_1$ and $A_2$ are quantum predicates in $\mathcal{H}_{1}$, $\mathcal{H}_{2}$, respectively, then $A_1\otimes A_2$ can be used as the conjunction (in $\mathcal{H}_1\otimes\mathcal{H}_2$) of $A_1$ and $A_2$ because for any density operator $\rho_1$ in $\mathcal{H}_{1}$ and $\rho_2$ in $\mathcal{H}_{2}$, it holds that $$\mathit{tr}((A_1\otimes A_2)(\rho_1\otimes\rho_2))=\mathit{tr}(A_1\rho_1)\cdot\mathit{tr}(A_2\rho_2);$$ for example, see equation (\ref{ghost}).
\item For a family $\{A_i\}$of quantum predicates in the same Hilbert space $\mathcal{H}$, and a probability distribution $\{p_i\}$, we often use the convex combination $\sum_ip_iA_i$ as a connective; for example, see rules (R.CC) and (R.Inv) in Figure \ref{fig 3.4}. 
\item In a sense, quantum predicate $\sum_m M_m^\dag A_mM_m$ in rule (R.IF) in Figure \ref{fig 3.2} can be understood as the disjunction of (the conjunction of) \textquotedblleft the outcome of measurement $M$ is $m$\textquotedblright\ and quantum predicate $A_m$. 
\end{enumerate}
\end{exam}

\section{Proof Outlines}\label{outline}

The notion of proof outline can be introduced to structure a correctness proof of a classical \textbf{while}-program, according to the structure of the program itself, so that the proof is easier to follow (see \cite{Apt09}, Section 3.4). 
It is also a basis for defining non-interference in reasoning about parallel programs with shared variables.  
In this section, we generalise this notion to the quantum case. 

\begin{defn}\label{def-outline} Let $S$ be a quantum \textbf{while}-programs. \begin{enumerate}\item A proof outline for partial correctness of $S$ is a formula $\{A\}P^\ast\{B\}$ formed by the formation axioms and rules in Figure \ref{fig 3.2+}, where $P^\ast$ results from $P$ by interspersing quantum predicates. 
\begin{figure*}[!t]
\normalsize
\begin{equation*}\begin{split}
&({\rm Ax.Sk'})\ \ \ \{A\}\mathbf{Skip}\{A\}\\ 
&({\rm Ax.In.B'})\ \ \ \{|0\rangle_q\langle 0|A|0\rangle_q\langle 0|+|1\rangle_q\langle
0|A|0\rangle_q\langle 1|\}q:=|0\rangle\{A\}
 \\ &({\rm Ax.In.I'}) \ \ \ \left\{\sum_{n=-\infty}^{\infty}|n\rangle_q\langle 0|A|0\rangle_q\langle
n|\right\}q:=|0\rangle\{A\}\\ 
&({\rm Ax.UT'})\ \ \ 
\{U^{\dag}AU\}\overline{q}:=U\left[\overline{q}\right]\{A\}\\ 
&({\rm R.SC'})\ \ \ 
\frac{\{A\}P^\ast_1\{B\}\ \ \ \ \ \ \{B\}P^\ast_2\{C\}}{\{A\}P^\ast_1;\{B\}P^\ast_2\{C\}}\\
&({\rm R.IF'})\ \ \ 
\frac{\{A_{m_i}\}P^\ast_{m_i}\{B\}\ (i=1,...,k)}{\begin{array}{ccc}\left\{\sum_{i=1}^{k}
M_{m_i}^{\dag}A_{m_i}M_{m_i}\right\}\ \mathbf{if}\  
M[\overline{q}]=m_1\rightarrow \left\{A_{m_1}\right\}P^\ast_{m_1}\\ 
\ \ \ \ \ \ \ \ \ \ \ \ \ \ \ \ \ \ \ \ \ \ \ \ \ \ \ \ \ .................. \\ \\ 
\ \ \ \ \ \ \ \ \ \ \ \ \ \ \ \ \ \ \ \ \ \ \ \ \ \ \ \ \ \ \ \ \ \ \ \ \ \ \ \ \ \ \ \square\ M[\overline{q}]=m_k\rightarrow \left\{ A_{m_k}\right\}P^\ast_{m_k}\\ \ \ \ \ \mathbf{fi}\ \{B\}\end{array}}\\ 
&({\rm R.LP'})\ \ \ 
\frac{\{B\}P^\ast\left\{M_0^{\dag}AM_0+M_1^{\dag}BM_1\right\}}{\begin{array}{ccc}\{M_0^{\dag}AM_0+M_1^{\dag}BM_1\}\ \mathbf{while}\
M[\overline{q}]=1\ \mathbf{do}\ \left\{ B\right\}\ P^\ast\ \left\{M_0^{\dag}AM_0+M_1^{\dag}BM_1\right\}\ \mathbf{od}\ \{A\}\end{array}}\\
&({\rm R.Or'})\ \ \ \frac{A\sqsubseteq
A^{\prime}\ \ \ \ \{A^{\prime}\}P^\ast\{B^{\prime}\}\ \ \ \
B^{\prime}\sqsubseteq B}{\{A\}\{A^\prime\}P\{B^\prime\}\{B\}}\\
&({\rm R.Del})\ \ \ \ \frac{\{A\}P^\ast\{B\}}{\{A\}P^{\ast\ast}\{B\}}
\end{split}\end{equation*}
\caption{Formation Axioms and Rules for Partial Correctness of Quantum \textbf{while}-Programs.\ \ \ \ In (R.IF'), $\{m_1,...,m_k\}$ is the set of all possible outcomes of measurement $M$. In (R.Del), $S^{\ast\ast}$ is obtained by deleting some quantum predicates from $S^\ast$.}\label{fig 3.2+}
\hrulefill \vspace*{4pt}
\end{figure*}

\item A proof outline for total correctness of $S$ is defined by introducing ranking function into rule (R.LP') (see Definition \ref{bou-def} and rule (R.LT)).  
\end{enumerate}
\end{defn}

Let us consider a simple example to illustrate the above definition. 

\begin{exam}[Teleportation \cite{Benn92}] Quantum teleportation is a protocol that can send quantum states only using a classical communication channel. Suppose that Alice possesses two qubits $p,q$ and Bob possesses qubit $r$, and there is entanglement, i.e. the EPR (Einstein-Podolsky-Rosen) pair: $$|\beta_{00}\rangle=\frac{1}{\sqrt{2}}(|00\rangle+|11\rangle),$$ between $q$ and $r$. Then Alice can send a quantum state $|\psi\rangle=\alpha|0\rangle+\beta|1\rangle$ to Bob, i.e. from $p$ to $r$, by two-bit classical communication (for detailed description, see \cite{Nielsen}, Section 1.3.7). This protocol can be written as quantum program QTEL in Figure \ref{fig 3.tel}. The correctness of QTEL can be described as the Hoare triple: $$\left\{|\psi\rangle_p\langle\psi|\otimes|\beta_{00}\rangle_{q,r}\langle\beta_{00}|\right\}{\rm QTEL} \left\{I_p\otimes I_q\otimes |\psi\rangle_r\langle\psi|\right\}
.$$ A proof outline for the correctness of QTEL is also presented in Figure \ref{fig 3.tel}. 
\begin{figure*}[!t]
\normalsize
\begin{align*}&\left\{|\psi\rangle_p\langle\psi|\otimes|\beta_{00}\rangle_{q,r}\langle\beta_{00}|\right\}\\
&\sqsubseteq\{|\beta_{00}\rangle_{p,q}\langle\beta_{00}|\otimes |\psi\rangle_r\langle\psi|+|\beta_{10}\rangle_{p,q}\langle\beta_{10}|\otimes |\psi_1\rangle_r\langle\psi_1|\\ & \ \ \ \ \ \ \ \ \ \ \ \ \ \ \ \ \ \ \ \ \ \ \ \ \ \ \ \ \ \ \ \ \ \ \ \ \ \ \ \ \ \ \ \ \ \ +|\beta_{01}\rangle_{p,q}\langle\beta_{01}|\otimes |\psi_2\rangle_r\langle\psi_2|+|\beta_{11}\rangle_{p,q}\langle\beta_{11}|\otimes |\psi_3\rangle_r\langle\psi_3|\}\\ 
{\rm QTEL}\equiv\ &p,q:=\ {\rm CNOT}[p,q];\\
&\{|+\rangle_p\langle +|\otimes |0\rangle_q\langle 0|\otimes |\psi\rangle_r\langle\psi|+|-\rangle_p\langle -|\otimes |0\rangle_q\langle 0|\otimes |\psi_1\rangle_r\langle\psi_1|\\ &\ \ \ \ \ \ \ \ \ \ \ \ \ \ \ \ \ \ \ \ \ \ \ \ \ \ \ \ \ \ \ \ \ \ \ \ \ \ \ \ \ \ \ \ \ \ \ \ \ +|+\rangle_p\langle +|\otimes |1\rangle_q\langle 1|\otimes |\psi_2\rangle_r\langle\psi_2|+|-\rangle_p\langle -|\otimes |1\rangle_q\langle 1|\otimes |\psi_3\rangle_r\langle\psi_3|\}\\ 
&p:=H[p];\\ 
&\{|0\rangle_p\langle 0|\otimes |0\rangle_q\langle 0|\otimes |\psi\rangle_r\langle\psi|+|1\rangle_p\langle 1|\otimes |0\rangle_q\langle 0|\otimes |\psi_1\rangle_r\langle\psi_1|\\ & \ \ \ \ \ \ \ \ \ \ \ \ \ \ \ \ \ \ \ \ \ \ \ \ \ \ \ \ \ \ \ \ \ \ \ \ \ \ \ \ \ \ \ \ \ \ \ +|0\rangle_p\langle 0|\otimes |1\rangle_q\langle 1|\otimes |\psi_2\rangle_r\langle\psi_2|+|1\rangle_p\langle 1|\otimes |1\rangle_q\langle 1|\otimes |\psi_3\rangle_r\langle\psi_3|\}
\\ &\mathbf{if}\ M[q]=0\rightarrow \left\{|0\rangle_p\langle 0|\otimes I_q\otimes |\psi\rangle_r\langle\psi|+|1\rangle_p\langle 1|\otimes I_q\otimes |\psi_1\rangle_r\langle\psi_1|\right\} \mathbf{skip}\\
&\square\ \ \ \ \ \ \ \ \ \ \ \ \ \ \ 1\rightarrow \left\{|0\rangle_p\langle 0|\otimes I_q\otimes |\psi_2\rangle_r\langle\psi_2|+|1\rangle_p\langle 1|\otimes I_q\otimes |\psi_3\rangle_r\langle\psi_3|\right\} r:=X[r]\\
&\mathbf{fi};\\
&\left\{|0\rangle_p\langle 0|\otimes I_q\otimes |\psi\rangle_r\langle\psi|+|1\rangle_p\langle 1|\otimes I_q\otimes |\psi_1\rangle_r\langle\psi_1|\right\}\\
&\mathbf{if}\ M[p]=0\rightarrow \left\{I_p\otimes I_q\otimes |\psi\rangle_r\langle\psi|\right\} \mathbf{skip}\\
&\square\ \ \ \ \ \ \ \ \ \ \ \ \ \ \ 1\rightarrow \left\{I_p\otimes I_q\otimes |\psi_1\rangle_r\langle\psi_1|\right\} r:=Z[r]\\
&\mathbf{fi}\left\{I_p\otimes I_q\otimes |\psi\rangle_r\langle\psi|\right\}
\end{align*}
\caption{Quantum Teleportation Program and Correctness Proof Outline. \ \ \ \ (1) Quantum states: $|\psi_1\rangle=\alpha|0\rangle-\beta|1\rangle$, $|\psi_2\rangle=\beta|0\rangle+\alpha|1\rangle$, $|\psi_3\rangle=-\beta|0\rangle+\alpha|1\rangle$. (2) Entangled states: $|\beta_{01}=\frac{1}{\sqrt{2}}(|01\rangle+|10\rangle$, $|\beta_{10}=\frac{1}{\sqrt{2}}(|00\rangle-|11\rangle$, $|\beta_{01}=\frac{1}{\sqrt{2}}(|01\rangle-|10\rangle$. (3) Measurement in the computational basis: $M=\{M_0,M_1\}$, where $M_0=|0\rangle\langle 0|$, $M_1=|1\rangle\langle 1|$.}\label{fig 3.tel}
\hrulefill \vspace*{4pt}
\end{figure*}
\end{exam}

We can define the notion of subprogram of a quantum program in a familiar way; namely by induction on the length of the program. Then a special form of proof outline is defined in the following: 

\begin{defn}A proof outline $\{A\}P^\ast\{B\}$ of $P$ is called standard if every subprogram $Q$ of $P$ is proceeded by exactly one quantum predicate, denoted $\mathit{pre}(Q)$, in $P^\ast$.\end{defn}

The next proposition shows that standard proof outlines are indeed sufficient for reasoning about correctness of quantum programs. 

\begin{prop}\begin{enumerate}\item If $\{A\}P^\ast\{B\}$ is a proof outline for partial correctness (respectively, total correctness), then $\vdash_\mathit{qPD}\{A\}P\{B\}$ (respectively, $\vdash_\mathit{qTD}\{A\}P\{B\}$). 
\item If $\vdash_\mathit{qPD}\{A\}P\{B\}$ (respectively, $\vdash_\mathit{qTD}\{A\}P\{B\}$), then there is a standard proof outline $\{A\}P^\ast\{B\}$ for partial correctness (respectively, total correctness).
\end{enumerate}
\end{prop}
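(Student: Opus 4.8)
The plan is to prove the two directions of the proposition by structural induction on the quantum \textbf{while}-program $P$, exploiting the tight correspondence between the formation rules in Figure \ref{fig 3.2+} and the proof rules of qPD in Figure \ref{fig 3.2}. The key observation is that every formation axiom and rule $(\mathrm{Ax.Sk'})$, $(\mathrm{Ax.In.B'})$, $(\mathrm{Ax.In.I'})$, $(\mathrm{Ax.UT'})$, $(\mathrm{R.SC'})$, $(\mathrm{R.IF'})$, $(\mathrm{R.LP'})$, $(\mathrm{R.Or'})$ is the exact counterpart of the corresponding qPD rule, differing only in that the outline version records intermediate predicates inside $P^\ast$. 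The two extra rules, $(\mathrm{R.Del})$ and the chaining of consequence in $(\mathrm{R.Or'})$, are the only places where the outline carries more bookkeeping than the bare derivation.

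For part (1), the plan is to define an erasure map that sends a proof outline $\{A\}P^\ast\{B\}$ to the plain Hoare triple $\{A\}P\{B\}$ by stripping all interspersed predicates, and to show by induction on the formation of $\{A\}P^\ast\{B\}$ that this erased triple is derivable in qPD. For each formation axiom/rule I would apply the matching qPD axiom/rule to the inductive hypotheses: $(\mathrm{R.SC'})$ maps to $(\mathrm{R.SC})$, $(\mathrm{R.IF'})$ to $(\mathrm{R.IF})$, $(\mathrm{R.LP'})$ to $(\mathrm{R.LP})$, and $(\mathrm{R.Or'})$ to $(\mathrm{R.Or})$. The rule $(\mathrm{R.Del})$ is erasure-invariant: deleting predicates from $P^\ast$ leaves the erased triple unchanged, so the inductive hypothesis applies directly. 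The total-correctness case is identical, using qTD in place of qPD and invoking $(\mathrm{R.LT})$ together with Definition \ref{bou-def} for the loop rule, since by Definition \ref{def-outline}(2) a total-correctness outline is obtained simply by attaching a ranking function to $(\mathrm{R.LP'})$.

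For part (2), the plan is to proceed by induction on the qPD derivation of $\{A\}P\{B\}$, constructing at each step a standard proof outline in which every subprogram is annotated by exactly one precondition. For the atomic statements the outline is the axiom itself. For $(\mathrm{R.SC})$, I would combine the two standard outlines for $P_1$ and $P_2$ via $(\mathrm{R.SC'})$, which automatically inserts the shared middle predicate $B$ and preserves standardness. For $(\mathrm{R.IF})$ and $(\mathrm{R.LP})$ the corresponding primed rules likewise insert exactly the needed annotations. The delicate point is handling applications of $(\mathrm{R.Or})$: strengthening the precondition or weakening the postcondition would a priori attach two predicates (the old and the new) to the same program point, violating standardness. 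The plan here is to use $(\mathrm{R.Or'})$ together with $(\mathrm{R.Del})$ to collapse consecutive predicates back down to a single annotation, absorbing the consequence step into the neighbouring predicate; this is exactly what the $(\mathrm{R.Del})$ rule is designed to permit.

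I expect the main obstacle to be precisely this standardness maintenance under the consequence rule. Ensuring that the composition $(\mathrm{R.SC'})$ does not leave a subprogram annotated twice, and that every invocation of $(\mathrm{R.Or})$ can be normalised via $(\mathrm{R.Del})$ without disturbing annotations already placed on subprograms of $P_1$ and $P_2$, requires a careful bookkeeping argument about where predicates sit relative to subprogram boundaries. A clean way to manage this is to prove a normalisation lemma stating that any proof outline can be transformed into a standard one with the same outermost pre- and postcondition, using $(\mathrm{R.Del})$ to remove redundant predicates and $(\mathrm{R.Or'})$ to merge adjacent consequence annotations; part (2) then follows by first building an arbitrary outline from the qPD derivation and then normalising. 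Everything else is a routine case-by-case matching between the two rule systems.
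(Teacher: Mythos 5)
Your proposal is correct and follows essentially the same route as the paper, which proves the proposition by induction on the lengths of the formation (for part 1) and of the qPD/qTD derivation (for part 2); your erasure map and the use of (R.Del) together with (R.Or$'$) to restore standardness after applications of the consequence rule are exactly the details the paper leaves implicit. No gaps.
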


\begin{proof}Induction on the lengths of proof and formation.\end{proof}

As in classical programming, another usage of proof outline is that it enables us to establish a strong soundness of quantum Hoare logic. To this end, we need the following:  

\begin{defn}\label{def-head} Let $P$ be a quantum \textbf{while}-program and $T$ a subprogram of $P$. Then $\mathit{at}(T,P)$ is inductively defined as follows:\begin{enumerate}\item If $T\equiv P$, then $\mathit{at}(T,P)\equiv P$;
\item If $P\equiv P_1;P_2$, then $\mathit{at}(T,P)\equiv\mathit{at}(T,P_1);P_2$ when $T$ is a subprogram of $P_1$, and $\mathit{at}(T,P)\equiv\mathit{at}(T,P_2)$ when $T$ is a subprogram of $P_2$; \item If $P\equiv\mathbf{if}\ (\square m\cdot M[\overline{q}]=m\rightarrow P_m)\ \mathbf{fi}$, then for each $m$, whenever $T$ is a subprogram of $P_m$, $\mathit{at}(T,P)\equiv\mathit{at}(T,P_m)$; \item If $P\equiv\mathbf{while}\ M[\overline{q}]=1\ \mathbf{do}\ P^\prime\ \mathbf{od}$ and $T$ is a subprogram of $P^\prime$, then $\mathit{at}(T,P)\equiv\mathit{at}(T,P^\prime);P$.
\end{enumerate}
\end{defn}

Intuitively, $\mathit{at}(T,P)$ denotes the remainder of $P$ to be executed when the control is at subprogram $T$. 

Definitions \ref{def-outline} to \ref{def-head} are straightforward generalisations of the corresponding concepts for classical programs (see for example \cite{Apt09}, Section 3.4). However, the strong soundness (Theorem 3.3 in \cite{Apt09}) cannot be straightforwardly generalised to the quantum case. To present the strong soundness for quantum programs, we also need to extend the transition relation between configurations given in Definition \ref{def-op-sem} to a transition relation between configuration ensembles. We define a configuration ensemble as a multi-set $\{|\langle P_i,\rho_i\rangle|\}$ of configurations with $$\sum_i\mathit{tr}(\rho_i)\leq 1.$$ 
For simplicity, we identify a singleton $\{|\langle P,\rho\rangle|\}$ with the configuration $\langle P,\rho\rangle$.  

\begin{defn} The transition relation between configuration ensembles is of the form: 
\begin{equation}\label{op-sem-extend}\{|\langle P_i,\rho_i\rangle|\}\rightarrow\{|\langle Q_j,\sigma_j\rangle|\}\end{equation} and defined by rules (Sk), (In), (UT), (SC) in Figure \ref{fig 3.1} together with the rules presented in Figure \ref{fig ext-3.1}.
\begin{figure*}[!t]\normalsize
\begin{equation*}\begin{split}
&({\rm IF'})\ \ \ \langle\mathbf{if}\ (\square m\cdot
M[\overline{q}]=m\rightarrow P_m)\ \mathbf{fi},\rho\rangle\rightarrow\{|\langle
P_m,M_m\rho M_m^{\dag}\rangle|\}\\
&({\rm L}')\ \ \ \langle\mathbf{while}\
M[\overline{q}]=1\ \mathbf{do}\
P\ \mathbf{od},\rho\rangle\rightarrow \{|\langle \downarrow, M_0\rho M_0^{\dag}\rangle, \langle
P;\mathbf{while}\ M[\overline{q}]=1\ \mathbf{do}\ P\ \mathbf{od}, M_1\rho
M_1^{\dag}\rangle|\}\\
&({\rm MS})\ \ \ \frac{\langle P,\rho\rangle\in\mathcal{A}\ \ \ \ \langle P,\rho\rangle\rightarrow\mathcal{B}}{\mathcal{A}\rightarrow\left(\mathcal{A}\setminus\{|\langle P,\rho\rangle|\}\right)\cup\mathcal{B}}
\end{split}\end{equation*}
\caption{Extended Transition Rules for Quantum \textbf{while}-Programs.\ \ \ \ In (MU), $\mathcal{A}, \mathcal{B}$ are configuration ensembles.}\label{fig ext-3.1}
\hrulefill \vspace*{4pt}
\end{figure*}
\end{defn}

Now we are ready to present the strong soundness of the quantum Hoare logic. 

\begin{thm}[Strong Soundness]\label{thm.strong-sound} Let $\{A\}P^\ast\{B\}$ be a standard proof outline for partial correctness. If $$\langle P,\rho\rangle\rightarrow^\ast\{|\langle P_i,\rho_i\rangle|\},$$ then: \begin{enumerate} 
\item for each $i$, $P_i\equiv \mathit{at}(T_i,P)$ for some subprogram $T_i$ of $P$ or $P_i\equiv\ \downarrow$; and 
\item $\mathit{tr}(A\rho)\leq\sum_i\mathit{tr}\left(B_i\rho_i\right)$, where $$B_i=\begin{cases}B\ &{\rm if}\ P_i\equiv\ \downarrow,\\
\mathit{pre}\left(T_i\right)\ &{\rm if}\ P_i\equiv\mathit{at}\left(T_i,P\right).
\end{cases}$$
\end{enumerate}
\end{thm}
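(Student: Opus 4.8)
The plan is to prove both claims simultaneously by induction on the number $n$ of steps in the derivation $\langle P,\rho\rangle\rightarrow^n\{|\langle P_i,\rho_i\rangle|\}$, exploiting the fact that, by rule (MS), every step of the extended transition relation replaces exactly one configuration of the current ensemble by the outcome of a single-configuration transition. This localisation is what makes the inductive step manageable: it suffices to understand how one step acting on a single configuration $\langle Q,\sigma\rangle$ with $Q\equiv\mathit{at}(T,P)$ affects (i) the syntactic shape of the resulting programs and (ii) the quantity $\mathit{tr}(\mathit{pre}(T)\sigma)$ that this configuration contributes to the right-hand sum.

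For the base case $n=0$ the ensemble is $\{|\langle P,\rho\rangle|\}$; taking $T_0\equiv P$ we have $P\equiv\mathit{at}(P,P)$ and $\mathit{pre}(P)=A$, so $\mathit{tr}(A\rho)\leq\mathit{tr}(\mathit{pre}(P)\rho)$ holds trivially. For the inductive step I would apply the induction hypothesis to the first $n$ steps and then perform a case analysis on the single-configuration transition used in the last step, i.e. on the leftmost command of the subprogram $T$ at which control resides. Claim (1) is the routine part: for each transition rule one checks directly from Definition \ref{def-head} that the successor programs are again of the form $\mathit{at}(T',P)$ or $\downarrow$ --- for instance, unfolding a loop $P\equiv\mathbf{while}\ M[\overline{q}]=1\ \mathbf{do}\ P'\ \mathbf{od}$ produces the continuation $P';P\equiv\mathit{at}(P',P)$ via clause (4), while an initialisation or unitary embedded in a sequential composition passes control, through the convention $\downarrow;P_2=P_2$ in (SC), to $\mathit{at}(P_2,P)$.

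The heart of the argument is claim (2), which I would reduce to a \emph{local preservation} lemma: if $\langle Q,\sigma\rangle\rightarrow\{|\langle Q_j,\sigma_j\rangle|\}$ is a single step with $Q\equiv\mathit{at}(T,P)$, then $$\mathit{tr}(\mathit{pre}(T)\sigma)\leq\sum_j\mathit{tr}(N_j\sigma_j),$$ where $N_j$ is the assertion attached by the outline to the successor configuration (the postcondition $B$ if $Q_j\equiv\ \downarrow$). Since (MS) changes only one element of the ensemble, summing this inequality against the unchanged configurations and combining with the induction hypothesis yields $\mathit{tr}(A\rho)\leq\sum_i\mathit{tr}(B_i\rho_i)$. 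The local lemma is verified case by case using nothing more than cyclicity of the trace together with the specific shape of the formation axioms in Figure \ref{fig 3.2+}: for (Ax.UT') one has $\mathit{tr}(U^\dag A U\,\sigma)=\mathit{tr}(A\,U\sigma U^\dag)$; for (R.IF') one has $\mathit{tr}((\sum_i M_{m_i}^\dag A_{m_i}M_{m_i})\sigma)=\sum_i\mathit{tr}(A_{m_i}M_{m_i}\sigma M_{m_i}^\dag)$, matching rule (IF'); and for (R.LP') the invariant $D=M_0^\dag AM_0+M_1^\dag BM_1$ splits exactly as $\mathit{tr}(D\sigma)=\mathit{tr}(A\,M_0\sigma M_0^\dag)+\mathit{tr}(B\,M_1\sigma M_1^\dag)$, matching the two branches of (L'). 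The initialisation axioms are handled in the same way with the explicit expressions for $\rho^q_0$.

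The step I expect to be the main obstacle is the bookkeeping that links the \emph{postcondition} of a just-completed subprogram to the \emph{precondition} of whatever executes next, so that the $N_j$ above genuinely coincide with the assertions $B_i=\mathit{pre}(T_i)$ demanded by the theorem. This matching is exactly where the structure of the standard outline must be used: rule (R.SC') forces the shared intermediate assertion to serve simultaneously as postcondition of $P_1$ and precondition of $P_2$, and rule (R.LP') forces the body's postcondition to coincide with the loop invariant. Making this precise requires a careful auxiliary claim, proved by induction on the program structure, that for every subprogram $T$ of $P$ the assertion the outline places immediately after $\mathit{at}(T,P)$ fires its first command equals $\mathit{pre}(T')$ for the appropriate successor $T'$ (or $B$ at top-level termination); once this is established, claim (2) follows by assembling the local inequalities. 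Any applications of the consequence rule (R.Or') encountered at these junctions only weaken in the direction $\sqsubseteq$, hence preserve the inequality and cause no difficulty.
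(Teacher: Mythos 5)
Your proposal is correct and follows essentially the same route as the paper's proof: induction on the length of the computation, localisation to the single configuration changed by rule (MS), and a case analysis on the transition rule in which cyclicity of the trace converts the outline's precondition (e.g. $\sum_m M_m^\dag A_m M_m$ for a case statement, $M_0^\dag AM_0+M_1^\dag BM_1$ for a loop) into the sum $\sum_j\mathit{tr}(\mathit{pre}(Q_j)\sigma_j)$ over successor configurations. The ``local preservation lemma'' and the assertion-matching claim you isolate are exactly the inline computations the paper performs in Cases 1--3, so the two arguments coincide in substance.
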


\begin{proof}See Appendix \ref{proof.strong-sound}.\end{proof}

A difference between the above theorem and the strong soundness for classical \textbf{while}-programs (see \cite{Apt09}, Theorem 3.3) should be noticed. The summation in the right-hand side of the inequality in clause 2 of the above theorem indicates that we have to consider all the reached programs $P_i$ collectively in the quantum case. The reason behind it is that certain nondeterminism is introduced in the operational semantics of case statements and \textbf{while}-loops in quantum computation.  

\section{Auxiliary Axioms and Rules}\label{auxiliary}

Several auxiliary axioms and rules introduced in \cite{Gor75}, \cite{Harel79} (see also \cite{Apt09}, Section 3.8) are very useful for simplifying the presentation of correctness proofs of classical programs. In this section, we generalise some of them into the quantum case. For this purpose, we first introduce several notations. Let $X\subseteq Y\subseteq\mathit{Var}$, and let $A$ be an operator in $$\mathcal{H}_X=\bigotimes_{q\in X}\mathcal{H}_q.$$ Then operator: $$\mathit{cl}_Y(A)=A\otimes I_{\mathcal{H}_{Y\setminus X}}$$ is called the cylindric extension of $A$ in $\mathcal{H}_Y$. If $X,Y\subseteq\mathit{Var}$ and $X\cap Y=\emptyset$. Then partial trace $\mathit{tr}_Y$ is a mapping from operators in $\mathcal{H}_{X\cup Y}$ to operators in $\mathcal{H}_X$ defined by $$\mathit{tr}_Y(|\varphi\rangle\langle\psi|\otimes |\varphi^\prime\rangle\langle\psi^\prime|)=\langle\psi^\prime|\varphi^\prime\rangle\cdot |\varphi\rangle\langle\psi|$$ for every $|\varphi\rangle,|\psi\rangle$ in $\mathcal{H}_X$ and $|\varphi^\prime\rangle,|\psi^\prime\rangle$ in $\mathcal{H}_Y$. 

Now we can present the auxiliary axioms and rules for quantum programs in Figure \ref{fig 3.4}.

\begin{figure*}[!t]\normalsize
\begin{align*}
&({\rm Ax.Inv})\ \ \ \left\{A\right\}P\left\{A\right\}\\
&({\rm R.TI})\ \ \ \frac{\{A\}P\{B\}}{\left\{\mathit{tr}_WA\right\}P\left\{B\right\}}\\
&({\rm R.CC})\ \ \ \frac{\left\{A_i\right\}P\left\{B_i\right\}\ (i=1,...,m)}{\left\{\sum_{i=1}^mp_iA_i\right\}P\left\{\sum_{i=1}^mp_iB_i\right\}}\\
&({\rm R.Inv})\ \ \ \frac{\{A\}P\{B\}}{\{pA+qC\}P\{pB+qC\}}\\ 
&({\rm R.SO})\ \ \ \frac{\{A\}P\{B\}}{\left\{\mathcal{E}^\ast(A)\right\}P\left\{\mathcal{E}^\ast(B)\right\}}\\
&({\rm R.Lim})\ \ \ \frac{\lim_{n\rightarrow\infty}A_n=A\ \ \ \left\{A_n\right\}P\left\{B_n\right\}\ \ \ \lim_{n\rightarrow\infty}B_n=B}{\{A\}P\{B\}}
\end{align*}
\caption{Auxiliary Axioms and Rules.\ \ \ \ In (Ax.Inv), $\mathit{var}(P)\cap V=\emptyset$ and $A=\mathit{cl}_{V\cup\mathit{var}(P)}(B)$ for some $V\subseteq\mathit{Var}$ and for some quantum predicate $B$ in $\mathcal{H}_V.$ In (R.TI), $V,W\subseteq \mathit{Var},$ $V\cap W=\emptyset,$ $A, B$ are quantum predicates in $\mathcal{H}_{V\cup W}$ and $\mathcal{H}_V,$ respectively, and $\mathit{var}(P)\subseteq V.$ In  (R.CC), $p_i\geq 0$ $(i=1,...,m)$ and $\sum_{i=1}^m p_j\leq 1.$ 
In (R.Inv), $p,q\geq 0$, $p+q\leq 1$, and $C$ is a quantum predicate in $\mathcal{H}_V$ for some $V\subseteq\mathit{Var}$ with $V\cap\mathit{var}(P)=\emptyset$.
In (R.SO), $\mathcal{E}$ is a quantum operation (or super-operator) in $\mathcal{H}_V$ for some $V\subseteq\mathit{Var}$ with $V\cap\mathit{var}(P)=\emptyset$. In (R.Lim), $\{A_n\}$ and $\{B_n\}$ are increasing and decreasing, respectively sequences with respect to the L\"{o}wner order.}\label{fig 3.4}
\hrulefill \vspace*{4pt}
\end{figure*}

It is interesting to compare our auxiliary axioms and rules for quantum programs with those for classical programs: 
\begin{itemize}
\item The appearance of axiom (Ax.Inv) is the same as axiom (INVARIANCE) in Section 3.8 of \cite{Apt09}. 
\item Obviously, rule (R.SO) is quantum generalisations of rule (SUBSTITUTION) in \cite{Apt09}, with the substitution $\overline{z}:=\overline{t}$ being replaced by an arbitrary super-operator $\mathcal{E}$. 
\item It is easy to see that rules (R.CC) and (R.Inv) are generalisations of rules (CONJUNCTION) and (INVARIANCE), respectively, in \cite{Apt09}, with logical conjunction being replaced by a probabilistic (convex) combination. 
\item It is more interesting to note that rule (R.TI) is a quantum generalisation of two rules (DISJUNCTION) and ($\exists$-INTRODUCTION) in \cite{Apt09}, where partial trace is considered as a quantum counterpart of logical disjunction and existence quantifier.  
\end{itemize}

The following theorem establishes soundness of the auxiliary axioms and rules in Figure \ref{fig 3.4}. 

\begin{thm}\label{Aux-Sound}\begin{enumerate}\item The axiom (Ax.Inv) is sound for partial correctness.
\item The rules (R.TI), (R.CC), (R.Inv), (R.SO) and (R.Lim) are sound both for partial and total correctness. 
\end{enumerate}\end{thm}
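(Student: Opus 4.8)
The plan is to reduce every item to a statement about the L\"owner order by invoking Lemma~\ref{lem-correctness}, so that the universally quantified input state disappears and only operator inequalities involving the dual super-operator $\llbracket P\rrbracket^\ast$ remain. Throughout I will use three structural facts: (i) since $\llbracket P\rrbracket$ acts only on $\mathcal{H}_{\mathit{var}(P)}$, on any larger space it factors as $\llbracket P\rrbracket\otimes\mathcal{I}$, so $\llbracket P\rrbracket^\ast$ commutes with any super-operator or cylindric extension supported on variables disjoint from $\mathit{var}(P)$; (ii) $\llbracket P\rrbracket^\ast$ is positive and subunital, i.e. $\llbracket P\rrbracket^\ast(I)\sqsubseteq I$ (this is just trace-nonincrease of $\llbracket P\rrbracket$ read off by duality); and (iii) the dual of a quantum operation and the partial trace are both positive maps, hence preserve $\sqsubseteq$. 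It is convenient to recall that, by Lemma~\ref{lem-correctness}, the partial-correctness criterion reads $A\sqsubseteq\llbracket P\rrbracket^\ast(B)+(I-\llbracket P\rrbracket^\ast(I))$, so the operator $I-\llbracket P\rrbracket^\ast(I)\sqsupseteq 0$ supplies a ``slack'' that is absent from the total-correctness criterion $A\sqsubseteq\llbracket P\rrbracket^\ast(B)$.

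For the three linear rules I expect short verifications. For (R.CC), linearity and positivity of $\llbracket P\rrbracket^\ast$ give $\sum_i p_iA_i\sqsubseteq\sum_i p_i\llbracket P\rrbracket^\ast(B_i)=\llbracket P\rrbracket^\ast(\sum_i p_iB_i)$ in the total case; in the partial case the same estimate leaves a defect $(\sum_i p_i)(I-\llbracket P\rrbracket^\ast(I))$, dominated by the slack precisely because $\sum_i p_i\le 1$. For (R.SO) I would first use fact~(i) to write $\llbracket P\rrbracket^\ast(\mathcal{E}^\ast(B))=\mathcal{E}^\ast(\llbracket P\rrbracket^\ast(B))$ and then apply the positive map $\mathcal{E}^\ast$ to the premise; the total case is immediate, while the partial case additionally needs $\mathcal{E}^\ast(I)\sqsubseteq I$, which holds because $\mathcal{E}$, being a quantum operation, is trace-nonincreasing. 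For (R.Lim) I would take limits in the inequalities from Lemma~\ref{lem-correctness}, using (normal) continuity of $\llbracket P\rrbracket^\ast$ to pass the limit through $\llbracket P\rrbracket^\ast(B_n)\to\llbracket P\rrbracket^\ast(B)$ and closedness of the positive cone to preserve $\sqsubseteq$; the monotonicity hypotheses on $\{A_n\},\{B_n\}$ are exactly what guarantees convergence and continuity along monotone sequences, which matters in the infinite-dimensional ($\mathbf{Int}$-typed) case.

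The invariance axiom and the structural rule (R.TI) require an honest computation with tensor factors. For (Ax.Inv), with $A=B\otimes I_{\mathit{var}(P)}$ and $B$ a predicate on $\mathcal{H}_V$, fact~(i) gives $\llbracket P\rrbracket^\ast(A)=B\otimes\llbracket P\rrbracket^\ast(I_{\mathit{var}(P)})$, and the partial-correctness goal collapses, after cancellation, to $(I_V-B)\otimes(I_{\mathit{var}(P)}-\llbracket P\rrbracket^\ast(I_{\mathit{var}(P)}))\sqsupseteq 0$, which holds as a tensor product of two positive operators (using $B\sqsubseteq I_V$). For (R.TI) I would translate the premise to $A\sqsubseteq\llbracket P\rrbracket^\ast(B)\otimes I_W$ (legitimate since $P$ does not touch $W$) and then apply $\mathit{tr}_W$, which is positive and hence order-preserving; the delicate point is to track how $\mathit{tr}_W$ interacts with the factor $I_W$ and to confirm that the resulting precondition is still a genuine predicate, this being the quantum shadow of the classical disjunction/$\exists$-introduction rule.

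The step I expect to be the real obstacle is the \emph{total}-correctness soundness of the invariant-augmenting rule (R.Inv). Reducing it via Lemma~\ref{lem-correctness} leaves $pA+qC\sqsubseteq p\llbracket P\rrbracket^\ast(B)+q\llbracket P\rrbracket^\ast(C)$; the $A$-part is covered by the premise, but by fact~(i) the invariant contributes $\llbracket P\rrbracket^\ast(C)=\llbracket P\rrbracket^\ast(I_{\mathit{var}(P)})\otimes C$, so $\llbracket P\rrbracket^\ast(C)-C=-(I_{\mathit{var}(P)}-\llbracket P\rrbracket^\ast(I_{\mathit{var}(P)}))\otimes C\sqsubseteq 0$, the \emph{wrong} sign. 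In partial correctness this defect is absorbed exactly by the slack, the computation reducing to $(I_{\mathit{var}(P)}-\llbracket P\rrbracket^\ast(I_{\mathit{var}(P)}))\otimes\bigl((1-p)I_V-qC\bigr)\sqsupseteq 0$, which follows from $C\sqsubseteq I_V$ and $p+q\le 1$; this is also the structural reason why (Ax.Inv) is asserted for partial correctness only. For the total case the slack has vanished, so the argument must dominate the term $(I_{\mathit{var}(P)}-\llbracket P\rrbracket^\ast(I_{\mathit{var}(P)}))\otimes C$ directly. I anticipate that closing this gap is the crux, and I would look either for a termination-type input making $I-\llbracket P\rrbracket^\ast(I)$ vanish on the relevant support, or for a more global estimate that does not split off the $qC$ summand.
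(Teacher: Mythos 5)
Your reduction via Lemma \ref{lem-correctness} and the subsequent case analysis follow the paper's own proof almost verbatim: the paper likewise handles (Ax.Inv) by the tensor factorisation that makes the goal $(I_V-B)\otimes(I-\llbracket P\rrbracket^\ast(I))\sqsupseteq 0$, (R.CC) by linearity and positivity of $I-\llbracket P\rrbracket^\ast(I)$, (R.SO) by commuting $\mathcal{E}^\ast$ with $\llbracket P\rrbracket^\ast$ together with $\mathcal{E}^\ast(I)\sqsubseteq I$, (R.Lim) by continuity of the dual, and (R.TI) by pushing $\mathit{tr}_W$ through $\llbracket P\rrbracket^\ast$. Your treatment of (R.Inv) for partial correctness is in fact more careful than the paper's: the paper justifies its final step with the identity $\llbracket P\rrbracket^\ast(C)=C$, which is false for divergent $P$ (one only has $\llbracket P\rrbracket^\ast(C)=\llbracket P\rrbracket^\ast(I)\otimes C\sqsubseteq C$); your absorption of the resulting defect $q\,(I-\llbracket P\rrbracket^\ast(I))\otimes C$ into the unused slack $(1-p)(I-\llbracket P\rrbracket^\ast(I))$ via $p+q\le 1$ and $C\sqsubseteq I_V$ is the correct repair.

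The one item you leave open, total correctness of (R.Inv), is a genuine gap in your write-up, but you should know it cannot be closed: the rule as stated is unsound for total correctness. Take $A=B=0$, $q>0$, $C\neq 0$ a predicate on $V$ with $V\cap\mathit{var}(P)=\emptyset$, and $P$ a divergent loop; the premise $\models_\mathit{tot}\{0\}P\{0\}$ holds vacuously, while the conclusion $\models_\mathit{tot}\{qC\}P\{qC\}$ demands $q\,\mathit{tr}(C\rho)\le q\,\mathit{tr}\bigl((\llbracket P\rrbracket^\ast(I)\otimes C)\rho\bigr)$, which fails since $\llbracket P\rrbracket^\ast(I)=0$. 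The paper does not notice this because it proves only the partial-correctness half and dismisses total correctness as ``simpler''; your diagnosis that $(I-\llbracket P\rrbracket^\ast(I))\otimes C$ enters with the wrong sign and that nothing remains to dominate it is exactly right, and the termination-type side condition you propose ($\llbracket P\rrbracket^\ast(I)=I$ on the relevant support) is precisely what would rescue the rule. A similar caveat attaches to (R.TI), as you half-suspect: $\mathit{tr}_W(I_{V\cup W})=\dim(\mathcal{H}_W)\,I_V$, so the paper's cancellation of the slack term silently drops a dimension factor and $\mathit{tr}_W A$ need not be a quantum predicate without renormalisation. Everything else in your proposal matches the paper's argument and is sound.
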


\begin{proof} See Appendix \ref{proof.1}
\end{proof}

Before concluding this section, let us see a simple example to show how the auxiliary axioms and rules can be used to derive some new properties of a quantum program. 

\begin{exam} Let $q$ be a qubit variable. Then by (Ax.UT) in Figure \ref{fig 3.3} we obtain: 
\begin{equation}\label{ax-Had}\{|-\rangle\langle -|\}\ q:=H[q]\ \{|1\rangle\langle 1|\}\end{equation} where $H$ is the Hadamard gate and $|-\rangle=\frac{1}{\sqrt{2}}(|0\rangle-|1\rangle)$. 
The rule (R.SO) in Figure \ref{fig 3.4} is especially useful for reasoning about quantum computation with noise. The amplitude damping channel is an important type of quantum noise where energy is lost from a quantum system with a certain probability $\gamma$. It  can be modelled by quantum operation:  
$$\mathcal{E}(\rho)=E_0\rho E_0+E_1\rho E_1$$ 
 for any density operator $\rho$, where 
$$E_0=\left(\begin{array}{cc} 1 & 0\\ 0 &\sqrt{1-\gamma}
\end{array}\right),\quad\quad E_1=\left(\begin{array}{cc} 1 & \sqrt{\gamma}\\ 0 &0
\end{array}\right).$$ 
Then applying (R.SO) to (\ref{ax-Had}) yields: 
\begin{align*}&\left\{\frac{1+\gamma}{2}|0\rangle\langle 0|-\frac{\sqrt{1-\gamma}}{2}(|0\rangle\langle 1|+|1\rangle\langle 0|)+\frac{1-\gamma}{2}|1\rangle\langle 1|\right\}\\ &\quad\quad\quad\quad q:=H[q]\ \left\{\gamma |0\rangle\langle 0|+(1-\gamma)|1\rangle\langle 1|\right\}.\end{align*}
 \end{exam}

\section{Mechanising Quantum Program Verification}\label{mechanising}

Proving correctness of classical programs using Hoare logic is very fiddly and tedious. As we saw from the example given in Figure \ref{fig 3.tel}, it is even worse to prove correctness of quantum programs because a huge amount of vector and matrix calculations is involved. Thus, automatic tools will be very helpful. 

Similar to the case of classical programs, a verifier can prove the correctness: $$\vdash_\mathit{par}\{A\}P\{B\}\ ({\rm or}\ \vdash_\mathit{tot}\{A\}P\{B\})$$ of a quantum program $P$ in the following three steps:
\begin{itemize}\item \textit{Annotating the program}: insert quantum predicates at the intermediate points (in a way similar to a proof outline described in Section \ref{outline}). Intuitively, an inserted statement is intended to hold when the control reaches the corresponding point.  
\item \textit{Generating verification conditions (VCs)}: a set of mathematical statements, usually of the form $A_i\sqsubseteq B_i$ (see Lemma \ref{lem-correctness}), is generated from the annotated program, where $A_i,B_i$ are matrices, and $\sqsubseteq$ stands for the L\"{o}wner order. It is required that if all the generated VCs are true, then $\vdash_\mathit{par} \{A\}P\{B\}$ (or $\vdash_\mathit{tot}\{A\}P\{B\}$). 
\item \textit{Proving VCs}: prove that matrix $B_i-A_i$ is positive semi-definite for each VC $A_i\sqsubseteq B_i$.  
\end{itemize}

Recently, a theorem prover was built by Liu, Li, Wang et al. in \cite{Liu} for the quantum Hoare logic based on the proof assistant Isabelle/HOL. It has been used to verify several quantum programs, including Grover search and phase estimation, which is a key in Shor's factoring algorithm. We can expect that more automatic tools for verification of quantum programs and quantum cryptographic protocols will be built after quantum computers be commercialised.

\section{Invariants of Quantum Programs}\label{invariant}

It has been well-understood since the very beginning that invariant generation is crucial for automatic verification of programs, and the problem of invariant generation has been intensively studied for classical programs. At this moment, invariants are provided by humans to the theorem prover for quantum programs developed in \cite{Liu} rather than generated by the system itself. But the issue of generating quantum invariants was recently considered in \cite{YYW17}. In this section, we briefly review the main results in \cite{YYW17}.
 
We first observed that the control flow of a quantum program can be represented by a super-operator-valued transition system (SVTS):
\begin{defn}[Super-Operator-Valued Transition Systems \cite{YYW17}] An SVTS is a $5$-tuple: $$\mathcal{S}=\langle\mathcal{H},L,l_0,\mathcal{T},\Theta\rangle,$$ where: \begin{enumerate}\item $\mathcal{H}$ is a Hilbert space; \item $L$ is a finite set of locations; \item $l_0\in L$ is the initial location; \item $\Theta$ is a quantum predicate in $\mathcal{H}$ denoting the initial condition; and \item $\mathcal{T}$ is a set of transitions. Each transition $\tau\in \mathcal{T}$ is written as $\tau=l\stackrel{\mathcal{E}}{\rightarrow}l^\prime$ with $l,l^\prime\in L$ and $\mathcal{E}$ being a super-operator in $\mathcal{H}$. For each $l\in L$, it is required that $$\mathcal{E}_l=\sum\{|\mathcal{E}: l\stackrel{\mathcal{E}}{\rightarrow} l^\prime\in\mathcal{T}|\}$$ 
is trace-preserving, i.e. $\mathit{tr}(\mathcal{E}_l(\rho))=\mathit{tr}(\rho)$ for all $\rho$.\end{enumerate}\end{defn}

Now we assume that the control flow SVTS of quantum program $P$ is $\mathcal{S}_P$ with state Hilbert space $\mathcal{H}_P$ (see \cite{YYW17} for detailed construction of $\mathcal{S}_P$). A set $\Pi$ of paths in $\mathcal{H}_P$ is said to be prime if for each $$\pi=l_1\stackrel{\mathcal{E}_1}{\rightarrow}...\stackrel{\mathcal{E}_{n-1}}{\rightarrow}l_n\in\Pi,$$ its proper initial segments $$l_1\stackrel{\mathcal{E}_1}{\rightarrow}...\stackrel{\mathcal{E}_{k-1}}{\rightarrow}l_k\notin\Pi$$ for all $k<n$. We write $\mathcal{E}_\pi$ for the composition of $\mathcal{E}_1,...,\mathcal{E}_{n-1}$ and $\mathcal{E}_\Pi=\sum\left\{|\mathcal{E}_\pi:\pi\in\Pi|\right\}$. Then we can define invariants for quantum programs:

\begin{defn}[Invariants of Quantum Programs \cite{YYW17}] Let $\mathcal{S}_P=\langle\mathcal{H}_P,L,l_0,\mathcal{T},\Theta\rangle$ be the control flow SVTS of quantum program $P$ and $l\in L$. An invariant at location $l\in L$ is a quantum predicate $O$ in $\mathcal{H}_P$ satisfying the condition: for any density operator $\rho$ and prime set $\Pi$ of paths from $l_0$ to $l$, we have: 
\begin{equation*}\tr(\Theta\rho)\leq 1- \tr\left(\mathcal{E}_\Pi(\rho)\right)+\tr\left(O\mathcal{E}_\Pi(\rho)\right).\end{equation*}\end{defn}

It was shown in \cite{YYW17} that invariants can be used to establish partial correctness of quantum programs. We can also introduce the notion of inductive assertion for quantum programs. It is not easy to identify the invariant in the proof rules (R.LP) and (R.LT) for quantum loops, at least not as explicit as in the Hoare rule for classical loops. Example 4.1 in \cite{YYW17} indicates that the quantum predicate $M_0^\dag AM_0+M_1^\dag BM_1$ in rules (R.LP) and (R.LT) can be viewed as an inductive assertion (invariant). Furthermore, by generalising the constraint-based technique of Col\'{o}n et al. \cite{Colon03}, \cite{SSM}, it was demonstrated in \cite{YYW17} that invariant generation for quantum programs can be reduced to an SDP (Semidefinite Programming) problem. In particular, the method proposed in \cite{YYW17} was actually applied to generate the invariants of the quantum walk in Example \ref{exam1} and quantum Metropolis sampling \cite{TOVPV11}. 

\section{Terminations Analysis of Quantum Programs}\label{termination}

As is well-known, termination analysis is a key step in proving total correctness of programs. Termination of quantum programs has been researched along the following two lines:\begin{itemize}\item Algorithmic analysis of termination for quantum programs was first considered in \cite{YF10} where the Jordan decomposition of complex matrices was employed as the main tool. In \cite{Yi13}, the author and his collaborators introduced quantum Markov chains as a semantic model of quantum programs. Then in a series of their papers \cite{YSG}, \cite{YuY}, \cite{Li-14}, termination of quantum programs was reduced to the reachability problem of quantum Markov chains, which is in turn tackled by developing a theory of quantum graphs (see \cite{Ying2016}, Section 5.2). Indeed, this line of research also paves a way to model-checking quantum systems.

\item The notion of ranking function was defined in \cite{Ying2011} in order to present the proof rule (R.LT) for total correctness of quantum loops. It was recalled as Definition \ref{bou-def} in this paper. In the last few years, (super)martingales have been employed as a powerful mathematical tools for termination analysis of probabilistic programs \cite{Chak13}, \cite{Fior15}, \cite{Chat16}. 
Recently, a notion of quantum (super)martingales was introduced in \cite{Ying2016b} as a generalisation of Definition \ref{bou-def}  and used to characterise the termination of quantum programs. The synthesis problem of (super)martingales for quantum programs was also investigated there. The basic idea is that the fundamental Gleason theorem \cite{Gleason} in quantum foundations can be used to determine the template of ranking functions, and then the synthesis problem of quantum (super)martingales with certain templates can also be reduced to an SDP problem. It seems that a further development of this line of research requires us to systematically establish a mathematical theory of quantum (super)martingales first.\end{itemize}

The termination of the quantum walk in Example \ref{exam1} was analysed by both of the above approaches \cite{Yi13}, \cite{Ying2016b}, and the second approach was also used in \cite{Ying2016b} to analyse the termination of quantum Bernoulli factory, a quantum algorithm for random number generation \cite{DJR15}.

\section{Conclusion}

We conclude this paper by pointing out several problems for future research:
\begin{itemize}
\item \textit{Combining Classical Computation and Quantum Computation}: We have been focusing on pure quantum programs without classical computation. This allows us to have a clean theory of quantum programming. But almost all existing quantum algorithms involves both quantum and classical computation, and the state-of-art quantum programming languages like Quipper \cite{Quipper}, LIQUi|> \cite{LIQUi}, Q\# \cite{Svor18}, Scaffold \cite{Scaffold} and QWire \cite{Pay17} include both classical and quantum variables. So, it is desirable to generalise the verification techniques discussed in this paper to quantum programs involving classical computation. Indeed, a kind of correctness formulas (Hoare triples) involving both classical and quantum variables were already introduced in \cite{YF11}, but a Hoare-like logic for programs with both classical and quantum variables is still to be developed. 

\item \textit{Parallel and Distributed Quantum Programs}: Distributed quantum computing has been studied for 20 years since \cite{Grover} and \cite{Cleve}, including finding quantum algorithms for solving paradigmatic problems from classical distributed computing \cite{Leader}, experiments toward physical implementation of distributed quantum computing \cite{SMB} and architecture of distributed quantum hardware systems \cite{Meter}. 
In particular, since practical quantum computers with large qubit capacity are still out of the current technology's reach, it is an attractive idea to use the physical resources of two or more small-capacity quantum computers to realise a large-capacity quantum computing system. 
So, another interesting problem is how the program logic and related techniques considered in this paper can be extended for reasoning about parallel and distributed quantum programs. The notion of proof outline for quantum programs introduced in Section \ref{outline} should be helpful in dealing with the (non)interference of the variables shared by component quantum programs.   

\item \textit{Verification of Quantum Cryptographic Protocols}: Over the last 10 years, quantum communication has blossomed into a viable commercial technology. But verifying correctness and security of quantum communication protocols is a notoriously difficult problem. Process algebra approach has been introduced for verification of quantum cryptographic protocols \cite{Gay}, \cite{FengY15}, \cite{Ku16}. On the other hand, a (probabilistic) relational Hoare logic (pRHL) and a machine-checked framework for reasoning about security and privacy in classical computing and communicating systems were established by Barthe, Fournet, K\"{o}pf et al. \cite{BFG}, \cite{BKO}. 
The success of this line of research suggests us to develop verification techniques for quantum cryptographic protocols by extending our quantum Hoare logic. Indeed, the first attempt to develop a quantum relational Hoare logic (qRHL) was already made by Unruh in \cite{Un18}, where a tool for the verification based on qRHL was implemented and successfully used to the security proof of several quantum cryptographic protocols.  
\end{itemize}

\textbf{Acknowledgment}: The author likes to thank Professors Martin Fr\"{a}nzle, Deepak Kapur and Naijun Zhan for inviting me to give a talk at SETTA'2016. 
This work was partly supported by the Australian Research Council (Grant No: DP160101652) and the Key Research Program of Frontier Sciences, Chinese Academy of Sciences.

\appendix
\section{Basic Properties of Operators in Hilbert Spaces}\label{app-a}

For convenience of the reader, we first review some basic properties of operators and super-operators needed in the proofs presented in Appendix \ref{app-b}. 
The following lemma gives a characterisation of the L\"{o}wner order in terms of trace. 

\begin{lem} Let $A, B$ be observables (i.e. Hermitian operators) in Hilbert space $\mathcal{H}$. Then $A\sqsubseteq B$ if and only if $\mathit{tr}(A\rho)\leq\mathit{tr}(B\rho)$ for all density operators in $\mathcal{H}$. 
\end{lem}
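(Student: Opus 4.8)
The plan is to reduce the statement to a single Hermitian operator $C=B-A$ and to prove the equivalent claim that $C$ is positive if and only if $\mathit{tr}(C\rho)\geq 0$ for every density operator $\rho$. This reformulation is legitimate because, by definition, $A\sqsubseteq B$ means precisely that $B-A=C$ is a positive operator, and by linearity of the trace the inequality $\mathit{tr}(A\rho)\leq\mathit{tr}(B\rho)$ holds for a given $\rho$ if and only if $\mathit{tr}(C\rho)\geq 0$. Since $A$ and $B$ are Hermitian, so is $C$, so the spectral machinery for self-adjoint operators is available throughout.

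For the forward direction, I would assume $C$ is positive and fix an arbitrary density operator $\rho$. Writing $\rho$ in its spectral decomposition $\rho=\sum_i p_i|\psi_i\rangle\langle\psi_i|$ with $p_i\geq 0$ and $\{|\psi_i\rangle\}$ orthonormal, I would compute
$$\mathit{tr}(C\rho)=\sum_i p_i\langle\psi_i|C|\psi_i\rangle\geq 0,$$
where each summand is nonnegative because $C\geq 0$ guarantees $\langle\psi|C|\psi\rangle\geq 0$ for every vector $|\psi\rangle$. Equivalently, one may argue via $\mathit{tr}(C\rho)=\mathit{tr}(\sqrt{\rho}\,C\sqrt{\rho})\geq 0$, since $\sqrt{\rho}\,C\sqrt{\rho}$ is positive whenever $C$ is.

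For the converse, I would assume $\mathit{tr}(C\rho)\geq 0$ for all density operators $\rho$ and aim to show $\langle\psi|C|\psi\rangle\geq 0$ for an arbitrary unit vector $|\psi\rangle$. The key observation is that the pure state $\rho=|\psi\rangle\langle\psi|$ is itself a density operator, so the hypothesis applies and yields
$$0\leq\mathit{tr}(C|\psi\rangle\langle\psi|)=\langle\psi|C|\psi\rangle.$$
As $|\psi\rangle$ was arbitrary, $C$ is positive, i.e.\ $A\sqsubseteq B$.

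There is essentially no hard step here: the result is a standard characterisation, and the only point requiring (mild) attention is that the density operators used to test positivity must include the pure states $|\psi\rangle\langle\psi|$, which is exactly what drives the converse direction. In infinite dimensions one should merely note that such pure states are always valid density operators, so the same argument applies verbatim with no extra trace-class caveats.
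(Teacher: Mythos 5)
Your proof is correct. The paper itself gives no proof of this lemma --- it simply remarks that the proofs of the lemmas in Appendix A ``can be found in any standard textbook on the theory of operators in Hilbert spaces'' --- and your argument (reducing to $C=B-A$, using the spectral decomposition of $\rho$ for the forward direction and testing against pure states $|\psi\rangle\langle\psi|$ for the converse) is exactly that standard textbook argument, with the infinite-dimensional point correctly noted.
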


The notion of dual super-operator is introduced in the next definition. It will be extensively used in Appendix \ref{app-b}. 

\begin{defn}\label{def-dual} Let $\mathcal{E}$ be a quantum operation (i.e. super-operator) in Hilbert space $\mathcal{H}$ with the Kraus representation $\mathcal{E}=\sum_i E_i\circ E_i^\dag$. Then its (Schr\"{o}dinger-Heisenberg) dual is the super-operator $\mathcal{E}^\ast$ defined by $$\mathcal{E}^\ast(A)=\sum_i E_i^\dag AE_i$$ for any observable $A$ in $\mathcal{H}$.  
\end{defn}

The next lemma presents a characterisation of duality between super-operators in terms of trace. 

\begin{lem}\label{lem-dual}For any quantum operation $\mathcal{E}$, observable $A$ and density operator $\rho$ in $\mathcal{H}$, we have: $$\mathit{tr}(A\mathcal{E}(\rho))=\mathit{tr}(\mathcal{E}^\ast(A)\rho).$$ In particular, $\mathit{tr}(\mathcal{E}(\rho))=\mathit{tr}(\mathcal{E}^\ast(I)\rho)$, where $I$ is the identity operator in $\mathcal{H}$. 
\end{lem}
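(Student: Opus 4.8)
The plan is to verify the identity by a direct computation from the Kraus representation of $\mathcal{E}$, relying only on the linearity and cyclicity of the trace. Since $\mathcal{E}$ is a quantum operation, I would start from its Kraus form $\mathcal{E}(\rho)=\sum_i E_i\rho E_i^\dag$ together with the definition $\mathcal{E}^\ast(A)=\sum_i E_i^\dag A E_i$ supplied in Definition \ref{def-dual}, so that both sides of the claimed equality are expressed in terms of the same operators $E_i$.

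First I would expand the left-hand side using linearity of the trace:
$$\mathit{tr}(A\mathcal{E}(\rho))=\mathit{tr}\Big(A\sum_i E_i\rho E_i^\dag\Big)=\sum_i\mathit{tr}(A E_i\rho E_i^\dag).$$
Next, to each summand I would apply the cyclic property of the trace, moving $E_i^\dag$ to the front, which gives $\mathit{tr}(A E_i\rho E_i^\dag)=\mathit{tr}(E_i^\dag A E_i\rho)$. Reassembling the sum by linearity then yields
$$\sum_i\mathit{tr}(E_i^\dag A E_i\rho)=\mathit{tr}\Big(\Big(\sum_i E_i^\dag A E_i\Big)\rho\Big)=\mathit{tr}(\mathcal{E}^\ast(A)\rho),$$
which is exactly the desired equality. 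For the special case stated at the end, I would simply substitute $A=I$, so that $\mathcal{E}^\ast(I)=\sum_i E_i^\dag E_i$ and $\mathit{tr}(\mathcal{E}(\rho))=\mathit{tr}(I\cdot\mathcal{E}(\rho))=\mathit{tr}(\mathcal{E}^\ast(I)\rho)$, an instance of the general formula.

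There is essentially no serious obstacle here; the argument is a two-line manipulation. The only points requiring any care are ensuring the interchange of summation and trace is legitimate (immediate in finite dimensions, and justified in the infinite-dimensional setting by convergence of the Kraus series together with boundedness of the operators), and observing that the computation is manifestly independent of which Kraus representation of $\mathcal{E}$ is chosen, so the statement is well posed even though $\mathcal{E}^\ast$ is defined via a particular representation.
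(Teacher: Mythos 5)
Your proof is correct and is exactly the standard argument: expand $\mathcal{E}$ in Kraus form, use linearity and cyclicity of the trace, and specialise to $A=I$. The paper itself gives no proof of this lemma (it defers to "any standard textbook on the theory of operators in Hilbert spaces"), and your computation is precisely the textbook derivation it has in mind, including the correct remarks about summation--trace interchange and independence of the Kraus representation.
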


Positivity of operators and the L\"{o}wner order between operators in a tensor product of Hilbert spaces are considered in the following: 

\begin{lem}\label{lem-tensor} \begin{enumerate}\item If $A_1,A_2$ are positive operators in $\mathcal{H}_1$ and $\mathcal{H}_2$, respectively, then $A_1\otimes A_2$ is a positive operator in $\mathcal{H}_1\otimes\mathcal{H}_2$. 
\item For any operators $A_1,B_1$ in $\mathcal{H}_1$ and $A_2, B_2$ in $\mathcal{H}_2$, $A_1\sqsubseteq B_1$ and $A_2\sqsubseteq B_2$ implies $A_1\otimes A_2\sqsubseteq B_1\otimes B_2.$ 
\end{enumerate}
\end{lem}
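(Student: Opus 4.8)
The plan is to prove the two clauses in turn, using clause (1) as the engine for clause (2). Everything rests on one elementary fact---an operator of the form $T^\dag T$ is positive, since $\langle\psi|T^\dag T|\psi\rangle=\|T|\psi\rangle\|^2\geq 0$---together with the mixed-product identities $(X\otimes Y)(Z\otimes W)=(XZ)\otimes(YW)$ and $(X\otimes Y)^\dag=X^\dag\otimes Y^\dag$ for the tensor product.

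For clause (1) I would first take positive square roots. As $A_1,A_2$ are positive, each has a positive (hence Hermitian) square root $A_i^{1/2}$, so $A_i=(A_i^{1/2})^\dag A_i^{1/2}$. Putting $T=A_1^{1/2}\otimes A_2^{1/2}$, the two tensor identities give
\begin{equation*}
A_1\otimes A_2=\bigl(A_1^{1/2}\otimes A_2^{1/2}\bigr)^\dag\bigl(A_1^{1/2}\otimes A_2^{1/2}\bigr)=T^\dag T,
\end{equation*}
which is positive by the fact just recalled. An equivalent route substitutes the spectral decompositions $A_1=\sum_i\lambda_i|u_i\rangle\langle u_i|$ and $A_2=\sum_j\mu_j|v_j\rangle\langle v_j|$ with $\lambda_i,\mu_j\geq 0$, yielding $A_1\otimes A_2=\sum_{i,j}\lambda_i\mu_j\,|u_iv_j\rangle\langle u_iv_j|$, a nonnegative combination of orthogonal rank-one projectors.

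For clause (2) I would reduce to clause (1) via the telescoping identity
\begin{equation*}
B_1\otimes B_2-A_1\otimes A_2=(B_1-A_1)\otimes B_2+A_1\otimes(B_2-A_2).
\end{equation*}
Here $B_1-A_1$ and $B_2-A_2$ are positive because $A_1\sqsubseteq B_1$ and $A_2\sqsubseteq B_2$, while $A_1$ and $B_2$ are positive as well; hence clause (1) makes each of the two summands positive, and their sum is positive, i.e. $A_1\otimes A_2\sqsubseteq B_1\otimes B_2$.

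The step I expect to be the real (if hidden) obstacle is this last one: the telescoping genuinely \emph{requires} $A_1$ and $B_2$ to be positive, and without such a hypothesis the statement is false. Over $\mathcal{H}_1=\mathcal{H}_2=\mathbb{C}$ it reduces to the claim that $a_1\leq b_1$ and $a_2\leq b_2$ force $a_1a_2\leq b_1b_2$, which fails for negative reals (take $a_i=-2\leq-1=b_i$, yet $4>1$). I would therefore make explicit that the operators in play are quantum predicates, for which $0\sqsubseteq A$ always holds; this positivity is exactly what licenses applying clause (1) to the two summands and is the crux of the argument.
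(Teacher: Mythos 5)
Your proof is correct, and it is essentially the standard textbook argument: the paper itself gives no proof of this lemma, stating only that the proofs of the Appendix~A lemmas ``can be found in any standard textbook on the theory of operators in Hilbert spaces,'' so there is nothing to diverge from. Both your routes for clause (1) (square roots giving $T^\dag T$, or spectral decomposition) and the telescoping identity for clause (2) are the usual arguments. Your closing caveat is also well taken and worth keeping: clause (2) as literally phrased (``for any operators'') is false, as your scalar counterexample $-2\le -1$ yet $4>1$ shows, and the positivity of $A_1$ and $B_2$ --- automatic for the quantum predicates to which the paper applies the lemma, e.g.\ in the soundness proof of (Ax.Inv) where everything in sight lies between $0$ and $I$ --- is precisely what licenses the reduction to clause (1).
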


The proofs of all of the above lemmas can be found in any standard textbook on the theory of operators in Hilbert spaces. 

\section{Compositions of Quantum Relations}\label{app-relation}

In Section \ref{sec-examples}, we already saw some simple quantum relations (i.e. quantum predicates in the tensor product of more than one state Hilbert spaces) in specifying correctness of quantum programs. 
The notion of composition of classical relations is well-defined and widely used to construct complicated relations from simple ones. However, it is highly nontrivial to find an appropriate definition of the notion of composition of quantum relations. 
Here, we give several tentative definitions: 

\begin{defn} Let $A$ be a quantum relation between $\mathcal{H}_1$ and $\mathcal{H}_2$ (i.e. an observable in $\mathcal{H}_1\otimes\mathcal{H}_2$ between the zero and identity operators) and $B$ a quantum relation between $\mathcal{H}_2$ and $\mathcal{H}_3$. We define three kinds of their compositions: 
\begin{enumerate}\item Circle composition: $$A\circ_\mathcal{B} B=\frac{1}{d_2}\sum_i\langle ii|A\otimes B|ii\rangle$$ where and in the sequel $d_2=\dim\mathcal{H}_2$ and $\mathcal{B}=\{|i\rangle\}$ is an orthonormal basis of $\mathcal{H}_2$.
\item Bullet composition: $$A\bullet_\mathcal{B} B=\langle\Psi|A\otimes B|\Psi\rangle$$ where $|\Psi\rangle_\mathcal{B}=\frac{1}{\sqrt{d_2}}\sum_i|ii\rangle$ is the (unnormalised) maximal entanglement defined by an orthonormal basis  $\mathcal{B}=\{|i\rangle\}$ of $\mathcal{H}_2$.
\item Diamond composition: $$A\diamond_v B=\mathit{tr}_{\mathcal{H}_2^{\otimes 2}}\left[S_v(A\otimes B)S_v\right]$$
where $v\in\{+,-\}$, $S_{\pm}$ are the symmetrisation and anti-symmetrisation operators (see Example \ref{ex-symmetry}), and $\mathit{tr}_{\mathcal{H}_2^{\otimes 2}}$ stands for tracing out the middle two $\mathcal{H}_2$'s in $\mathcal{H}_1\otimes\mathcal{H}_2\otimes\mathcal{H}_2\otimes\mathcal{H}_3$.\end{enumerate}
\end{defn}

The algebraic structure of quantum relations equipped with the above composition operations and corresponding transitive closures is a very interesting topic for future research.  

\section{Proofs of Theorems}\label{app-b}

In this section, we provide the proofs of theorems omitted in the main part of this paper.  

\subsection{Proof of Theorem \ref{thm.strong-sound}}\label{proof.strong-sound}

To prove the strong soundness, suppose that $$\langle P,\rho\rangle\rightarrow^n\{|\langle P_i,\rho_i\rangle|\}.$$ We proceed by induction on the length $n$ of computation. 

{\vskip 4pt}

\textbf{Induction Basis}: For $n=0$, $\{|\langle P_i,\rho_i\rangle|\}$ is a singleton $\{|\langle P_1,\rho_1\rangle|\}$ with $P_1\equiv P$ and $\rho_1\equiv \rho$. Then we can choose $T_1\equiv P$ and it holds that $P_1\equiv\mathit{at}(T_1,P)$. Note that in the proof outline $\{A\}P^\ast\{B\}$, we have $A\sqsubseteq\mathit{pre}(P)=B_1$. Thus, $$\mathit{tr}(A\rho)\leq\mathit{tr}(B_1\rho_1)=\sum_i\mathit{tr}(B_i\rho_i).$$ 

\textbf{Induction Step}: Now we assume that $$\langle P,\rho\rangle\rightarrow^{n-1}\{|\langle P_i,\rho_i\rangle|\}\rightarrow\{|\langle P_i,\rho_i\rangle|i\neq i_0|\}\cup\{|\langle Q_j,\sigma_j\rangle|\}$$ where $$\langle P_{i_0},\rho_{i_0}\rangle\rightarrow\{|\langle Q_j,\sigma_j\rangle|\}$$ is derived by one of the rules used in defining transition relation (\ref{op-sem-extend}). Then we need to consider the following cases: 

{\vskip 4pt}

$\blacktriangleright$ Case 1. The last step uses rule (IF$^\prime$). Then $P_{i_0}$ can be written in the following form: $$P_{i_0}\equiv\ \mathbf{if}\ \left(\square m\cdot M[\overline{q}]=m\rightarrow R_m\right)\ \mathbf{fi},$$ and for each $j$, $Q_j\equiv R_m\equiv \mathit{at}(R_m,P)$ and $\sigma_j=M_m\rho_{i_0}M_m^\dag$ for some $m$. On the other hand, a segment of the proof outline $\{A\}P^\ast\{B\}$ must be derived by the following inference:
$$\frac{\{A_m\}R_m^\ast\{C\}\ {\rm for\ every}\ m}{\left\{\sum_mM_m^\dag A_m M_m\right\}\ \mathbf{if}\ \left(\square m\cdot M[\overline{q}]=m\rightarrow\left\{A_m\right\} R_m^\ast\right)\ \mathbf{fi}\{C\}}$$ and $$B_{i_0}=\mathit{pre}\left(P_{i_0}\right)\sqsubseteq\sum_m M_m^\dag A_mM_m,\ \ \ \ A_m=\mathit{pre}\left(R_m\right).$$ Therefore, \begin{align*}\mathit{tr}\left(B_{i_0}\rho_{i_0}\right)&\leq\mathit{tr}\left(\sum_mM_m^\dag A_mM_m\rho_{i_0}\right)\\ 
&=\sum_m\mathit{tr}\left(M_m^\dag A_mM_m\rho_{i_0}\right)\\ &=\sum_m\mathit{tr}\left(A_mM_m\rho_{i_0}M_m^\dag\right)\\ &=\sum_j\mathit{tr}\left(\mathit{pre}\left(Q_j\right)\sigma_j\right).
\end{align*} By the induction hypothesis, we obtain: \begin{align*}\mathit{tr}(A\rho)&\leq\sum_{i\neq i_0}\mathit{tr}\left(B_i\rho_i\right)+\mathit{tr}\left(B_{i_0}\rho_{i_0}\right)\\ 
&\leq \sum_{i\neq i_0}\mathit{tr}\left(B_i\rho_i\right)+\sum_j\mathit{tr}\left(\mathit{pre}\left(Q_{j}\right)\sigma_{j}\right).
\end{align*} So, the conclusion is true in this case.

{\vskip 4pt}

$\blacktriangleright$ Case 2. The last step uses rule (L$^\prime$). Then $P_{i_0}$ must be in the following form: $$P_{i_0}\equiv\ \mathbf{while}\ M[\overline{q}]=1\ \mathbf{do}\ R\ \mathbf{od}$$ and $$\{|\langle Q_j,\sigma_j\rangle|\}=\{|\langle Q_0,\sigma_0\rangle,\langle Q_1,\sigma_1\rangle|\}$$ with $Q_0\equiv\mathbf{skip}, \sigma_0=M_0\rho_{i_0}M_0^\dag, Q_1\equiv\ R;P_{i_0}$ and $\sigma_1=M_1\rho_{i_0}M_1^\dag$. A segment of $\{A\}P^\ast\{B\}$ must be derived by the inference in Figure \ref{fig 4.0} \begin{figure*}[!t]\normalsize\begin{align*}\frac{\{D\}R^\ast\{M_0CM_0^\dag+M_1DM_1^\dag\}}{\begin{array}{ccc}\{M_0CM_0^\dag+M_1DM_1^\dag\}\ \mathbf{while}\ M[\overline{q}]=0\ \mathbf{do}\ \{C\}\ \mathbf{skip}\ \{C\}\\ 
\ \ \ \ \ \ \ \ \ \ \ \ \ \ \ \ \ \ \ \ \ \ \ \ \ \ \ \ \ \ \ \ \ \ \ \ \ \ \ \ \ \ \ \ \ \ \ \ \ \ \ \ \ \ \ \ \ \ \ \ \ \ \ \ \ \ \ \ \ \ \ \ \ \ \ \ \ \ \ \ \ \ \ \ \ \ \ \ \ \ \ \ \ \ =1\ \mathbf{do}\ R^\ast \{M_0CM_0^\dag+M_1DM_1^\dag\}\ \mathbf{od}\ \{C\}
\end{array}}\end{align*}\hrulefill \vspace*{4pt}
\caption{Proof of Theorem \ref{thm.strong-sound}}\label{fig 4.0}
\end{figure*} and $B_{i_0}\sqsubseteq M_0CM_0^\dag +M_1DM_1^\dag$. Then $Q_0\equiv\mathit{at}(\mathbf{skip},P), \mathit{pre}(Q_0)=C, Q_1\equiv\mathit{at}(R,P)$ and $\mathit{pre}(Q_1)=D$. It follows that \begin{align*}
\mathit{tr}\left(B_{i_0}\rho_{i_0}\right)&\leq\mathit{tr}\left[\left(M_0CM_0^\dag+M_1DM_1^\dag\right)\rho_{i_0}\right]\\ &=\mathit{tr}\left(M_0CM_0^\dag\rho_{i_0}\right) +\mathit{tr}\left(M_1DM_1^\dag\rho_{i_0}\right)\\
&=\mathit{tr}\left(CM_0^\dag\rho_{i_0}M_0\right)+\mathit{tr}\left(DM_1^\dag\rho_{i_0}M_1\right)\\ 
&= \mathit{tr}\left(\mathit{pre}(Q_0)\sigma_0\right)+\mathit{tr}\left(\mathit{pre}(Q_1)\sigma_1\right).
\end{align*} Furthermore, by the induction hypothesis, we have: \begin{align*}\mathit{tr}(A\rho)&\leq\sum_{i\neq i_0}\mathit{tr}\left(B_i\rho_i\right)+\mathit{tr}\left(B_{i_0}\rho_{i_0}\right)\\ 
&\leq \sum_{i\neq i_0}\mathit{tr}\left(B_i\rho_i\right)+\sum_j\mathit{tr}\left(\mathit{pre}\left(Q_{j}\right)\sigma_{j}\right).
\end{align*} Thus, the conclusion is true in this case.

{\vskip 4pt}

$\blacktriangleright$ Case 3. The last step uses rule (Sk), (In) or (UT). Similar but easier.   

\subsection{Proof of Theorem \ref{Aux-Sound}}\label{proof.1}

We only prove the theorem for partial correctness; the case of total correctness is simpler.

{\vskip 4pt}

1. We first prove that rule (Ax.Inv) is sound for partial correctness. Since $\mathit{var}(P)\cap V=\emptyset$, $\llbracket P\rrbracket$ can be seen as a super-operator $\mathcal{E}$ in $\mathcal{H}_{V^c}$. Then when considering $\llbracket P\rrbracket$ as a super-operator in $\mathcal{H}$, we have: \begin{align*}\llbracket P\rrbracket^\ast&(A)+\left[I-\llbracket P\rrbracket^\ast(I)\right]\\ &=B\otimes\mathcal{E}^\ast(I_{V^c})+\left[I-I_V\otimes\mathcal{E}^\ast\left(I_{V^c}\right)\right]\\ &=B\otimes\mathcal{E}^\ast\left(I_{V^c}\right)+I_V\otimes\left[I_{V^c}-\mathcal{E}^\ast\left(I_{V^c}\right)\right]\\ 
&\sqsupseteq B\otimes\mathcal{E}^\ast\left(I_{V^c}\right)+B\otimes\left[I_{V^c}-\mathcal{E}^\ast\left(I_{V^c}\right)\right]\\ 
&=B\otimes I_{V^c} = A
\end{align*} because $B\sqsubseteq I_{V^c}$ and $\mathcal{E}^\ast\left(I_{V^c}\right)\sqsubseteq I_{V^c}$, where $I$ stands for the identity operator in $\mathcal{H}$. 

{\vskip 4pt}

2. Now we prove the soundness of rule (R.TI) for partial correctness. Assume that $\models_\mathit{par}\{A\}P\{B\}$. Then it holds that $$A\sqsubseteq \llbracket P\rrbracket^\ast\left(B\otimes I_W\right)+\left(I-\llbracket P\rrbracket^\ast(I)\right)$$ where $I$ is the identity operator in $\mathcal{H}_{V\cup W}$.  
Note that $\mathit{var}(P)\subseteq V$. Then we have: $$\mathit{tr}_W\left(\llbracket P\rrbracket^\ast\left(B\otimes I_W\right)\right)=\llbracket P\rrbracket^\ast(B)$$ where the occurrences of $\llbracket P\rrbracket$ in the left-hand side and right-hand side are seen as a super-operator in $\mathcal{H}_{V\cup W}$ and one in $\mathcal{H}_V$, respectively. Similarly, we have  $\mathit{tr}_W\left(\llbracket P\rrbracket^\ast (I)\right)=\llbracket P\rrbracket^\ast(I_V).$
Therefore, it follows that 
 \begin{align*}
\mathit{tr}_WA &\sqsubseteq \mathit{tr}_W\llbracket P\rrbracket^\ast\left(B\otimes I_W\right)+\mathit{tr}_W\left(I-\llbracket P\rrbracket^\ast(I)\right)\\ 
&=\llbracket P\rrbracket^\ast(B) +\left(I_V-\llbracket P\rrbracket^\ast\left(I_V\right)\right)
\end{align*} and $\models_\mathit{par}\left\{\mathit{tr}_WA\right\}P\{B\}.$

{\vskip 4pt}
 
3. We prove the soundness of (R.CC) for partial correctness. If for every $i=1,...,m$, $\models_\mathit{par}\{A_i\}P\{B_i\}$, then by Lemma \ref{lem-correctness} we have: $$A_i\sqsubseteq \llbracket P\rrbracket^\ast(B_i)+\left[I-\llbracket P\rrbracket^\ast(I)\right].$$ Consequently, it holds that \begin{align*}\sum_{i=1}^mp_iA_i &\sqsubseteq\sum_{i=1}^mp_j\left(\llbracket P\rrbracket^\ast(A_j)+[I-\llbracket P\rrbracket^\ast(I)]\right)\\ &=\llbracket P\rrbracket^\ast\left(\sum_{i=1}^mp_i B_i\right)+[I-\llbracket P\rrbracket^\ast(I)]
\end{align*} because $\llbracket P\rrbracket^\ast(\cdot)$ is linear, and $I-\llbracket P\rrbracket^\ast(I)$ is positive. Thus, we have: $$\models_\mathit{par}\left\{\sum_{i=1}^mp_iA_i\right\}P\left\{\sum_{i=1}^mp_iB_i\right\}.$$

{\vskip 4pt}

4. We prove the soundness of (R.Inv) for partial correctness. From $\models_\mathit{par}\{A\}P\{B\}$, i.e. $A\sqsubseteq\llbracket P\rrbracket^\ast(B)+\left(I-\llbracket P\rrbracket^\ast(I)\right)$, we derive that \begin{align*}pA+qC&\sqsubseteq p\llbracket P\rrbracket^\ast(B)+p\left(I-\llbracket P\rrbracket^\ast(I)\right)\\
&=\left(p\llbracket P\rrbracket^\ast(B)+qC\right)+p\left(I-\llbracket P\rrbracket^\ast(I)\right)\\
&\sqsubseteq\left(p\llbracket P\rrbracket^\ast(B)+qC\right)+\left(I-\llbracket P\rrbracket^\ast(I)\right)\\
&=\llbracket P\rrbracket^\ast(pB+qC)+\left(I-\llbracket P\rrbracket^\ast(I)\right)
\end{align*} because $p\leq 1$ and $V\cap\mathit{var}(P)=\emptyset$ implies $\llbracket P\rrbracket^\ast(C)=C$. Therefore, we have $\models_\mathit{par}\{pA+qC\}P\{pB+qC\}$. 

{\vskip 4pt}

5. We prove that rule (R.SO) is sound for partial correctness. Suppose that $\models_\mathit{par}\{A\}P\{B\}$, i.e. $A\sqsubseteq\llbracket P\rrbracket^\ast(B)+(I-\llbracket P\rrbracket^\ast(I))$. Note that $\mathcal{E}$ is a super-operator in $\mathcal{H}_V$ and $V\cap\mathit{var}(P)=\emptyset$. Then $\mathcal{E}^\ast$ and $\llbracket P\rrbracket^\ast$ commutes, i.e. $\mathcal{E}^\ast\circ\llbracket P\rrbracket^\ast=\llbracket P\rrbracket^\ast\circ\mathcal{E}^\ast$. Moreover, it holds that $\mathcal{E}^\ast(I)\sqsubseteq I$ and thus $$\left(\mathcal{I}-\llbracket P\rrbracket^\ast\right)\left(\mathcal{E}^\ast(I)\right)\sqsubseteq \left(\mathcal{I}-\llbracket P\rrbracket^\ast\right)(I)$$ where $\mathcal{I}$ is the identity super-operator. Therefore, we obtain:\begin{align*}\mathcal{E}^\ast(A)&\sqsubseteq\mathcal{E}^\ast\left[\llbracket P\rrbracket^\ast(B)+(I-\llbracket P\rrbracket^\ast(I))\right]\\ 
&=\mathcal{E}^\ast\left(\llbracket P\rrbracket^\ast(B)\right)+\left[\mathcal{E}^\ast(I)-\mathcal{E}^\ast\left(\llbracket P\rrbracket^\ast(I)\right)\right]\\
&=\llbracket P\rrbracket^\ast\left(\mathcal{E}^\ast(B)\right)+\left(\mathcal{I}-\llbracket P\rrbracket^\ast\right)\left(\mathcal{E}^\ast(I)\right)\\
&\sqsubseteq\llbracket P\rrbracket^\ast\left(\mathcal{E}^\ast(B)\right)+\left(\mathcal{I}-\llbracket P\rrbracket^\ast\right)(I)\\
&=\llbracket P\rrbracket^\ast\left(\mathcal{E}^\ast(B)\right)+\left(I-\llbracket P\rrbracket^\ast(I)\right)
\end{align*} and $\models_\mathit{par}\{\mathcal{E}^\ast(A)\}P\{\mathcal{E}^\ast(B)\}$. 

{\vskip 4pt}

6. Finally, we prove the soundness of (R.Lim) for partial correctness. The existence of $\lim_{n\rightarrow\infty}A_n$ and $\lim_{n\rightarrow\infty}B_n$ is guaranteed by Lemma 4.1.3 in \cite{Ying2016}. Assume that $\models_\mathit{par}\{A_n\}P\{B_n\}$. Then $$A_n\sqsubseteq\llbracket P\rrbracket^\ast(B_n)+\left[I-\llbracket P\rrbracket^\ast(I)\right]$$ and the continuity of super-operator $\llbracket P\rrbracket^\ast$ yields: \begin{align*}
A =\lim_{n\rightarrow\infty}A_n&\sqsubseteq\lim_{n\rightarrow\infty}\left(\llbracket P\rrbracket^\ast(B_n)+\left[I-\llbracket P\rrbracket^\ast(I)\right]\right)\\
&=\llbracket P\rrbracket^\ast(\lim_{n\rightarrow\infty}B_n)+\left[I-\llbracket P\rrbracket^\ast(I)\right]\\
&=\llbracket P\rrbracket^\ast(B)+\left[I-\llbracket P\rrbracket^\ast(I)\right].
\end{align*}  Thus, $\models_\mathit{par}\{A\}P\{B\}$.
\end{document}